\documentclass[a4paper,english,thm-restate,cleveref,colorlinks=true, linkcolor=blue!75!green, urlcolor=blue!75!green, citecolor=blue!75!green,autoref,nameinlink]{lipics-v2021}

\usepackage{dsfont}
\usepackage[inline]{enumitem}
\usepackage{subcaption}
\DeclareTextFontCommand{\emph}{\color{teal}\em}

\newcommand{\N}{\ensuremath{\mathds{N}}}

\newcommand{\bt}{\ensuremath{\mathrm{bt}}}
\newcommand{\wbt}{\ensuremath{\mathrm{wbt}}}
\newcommand{\s}{\ensuremath{\mathcal{S}}}

\crefname{claim}{Claim}{Claims}

\graphicspath{{figures/}}

\newcommand{\prooflink}[1]{\hypersetup{linkcolor=lipicsBulletGray}\hyperref[#1]{$\blacktriangledown$}}
\newcommand{\statlink}[1]{\hypersetup{linkcolor=lipicsYellow}\hyperref[#1]{$\blacktriangle$}}
\newcommand{\mylink}{}
\newcommand{\cprooflink}[1]{\hypersetup{linkcolor=lipicsBulletGray}\hyperref[#1]{$\triangledown$}}
\newcommand{\cstatlink}[1]{\hypersetup{linkcolor=lipicsYellow}\hyperref[#1]{$\vartriangle$}}
\newcommand{\cmylink}{}

\newif\iflinkplaced
\newcommand{\linkhere}{%
  \global\linkplacedtrue
  \unskip\nobreak\hfill\mylink
}

\theoremstyle{claimstyle}

\newcommand{\supplementlink}{\url{https://doi.org/10.5281/zenodo.21622661}}

\title{Weighted Book Thickness}

\author{Henry Förster}{John Cabot University, Rome, Italy $\cdot$ Technical University of Munich, Heilbronn, Germany}{henry.foerster@johncabot.edu}{https://orcid.org/0000-0002-1441-4189}{}

\author{Michael Hoffmann}{Department of Computer Science, ETH Z\"urich, Switzerland \and\url{https://people.inf.ethz.ch/hoffmann/}}{hoffmann@inf.ethz.ch}{https://orcid.org/0000-0001-5307-7106}{}

\author{Stephen Kobourov}{Technical University of Munich, Heilbronn, Germany}{stephen.kobourov@tum.de}{https://orcid.org/0000-0002-0477-2724}{}

\author{Maria Eleni Pavlidi}{University of Ioannina, Greece}{m.e.pavlidi@uoi.gr}{https://orcid.org/0009-0009-4500-0112}{}

\author{Alexandra Weinberger}{FernUniversit\"at in Hagen, Germany}{alexandra.weinberger@fernuni-hagen.de}{https://orcid.org/0000-0001-8553-6661}{}

\author{Johannes Zink}{Technische Universität München, Germany \and \url{https://www.cs.cit.tum.de/algo/staff/johannes-zink/}}{johannes.zink@tum.de}{https://orcid.org/0000-0002-7398-718X}{}

\authorrunning{H. Förster, M. Hoffmann, S. Kobourov, M. E. Pavlidi, A. Weinberger, J. Zink} 

\supplement{Source codes for our computer-aided proofs of \cref{thm:max-nine-vtcs,thm:2tree} are publicly available at \supplementlink.}

\Copyright{Henry Förster, Michael Hoffmann, Stephen Kobourov, Maria Eleni Pavlidi, Alexandra Weinberger, Johannes Zink}

\acknowledgements{This work was initiated at the SWGD'24 workshop. Maria Eleni Pavlidi was supported by HFRI Grant No: 26320.}

\ccsdesc[500]{Mathematics of computing~Combinatorics}
\ccsdesc[500]{Mathematics of computing~Graph theory}
\ccsdesc[500]{Human-centered computing~Graph drawings}

\keywords{book thickness, linear layouts, 2-trees}

\nolinenumbers
\hideLIPIcs

\begin{document}
\maketitle

\begin{abstract}
We introduce and study the weighted book thickness of graphs. A \emph{$k$-page book embedding} of a graph~$G=(V,E)$ is defined by a spanning cycle~$C$ for~$V$ (which does not need to be part of~$G$) and a partition~$E=\bigcup_{i=1}^{k}E_i$ such that~$E\cap C\subseteq E_1$ and each graph~$G_i=(V,E_i\cup C)$, for~$1\le i\le k$, is outerplane with outer cycle~$C$. If~$e\in E_i$, we say that~$e$ appears on Page~$i$. The classical \emph{book thickness} of a graph~$G$ is the minimum~$k$ such that there exists a~$k$-page book embedding of~$G$, that is, the minimum (over all book embeddings of~$G$) achievable \emph{maximum} page an edge appears on.
In contrast, the weighted book thickness is the minimum achievable \emph{average} page an edge appears on.
The embeddings that realize weighted book thickness can differ from those that realize (classical) book thickness. We show that, although every planar graph on at most nine vertices admits a $2$-page book embedding realizing its weighted book thickness, already for ten vertices, there is a planar graph for which every realization of its weighted book thickness needs more pages than its book thickness. We prove that there even exists a $2$-tree whose weighted book thickness cannot be realized on two pages. On the positive side, we show that for every graph of pathwidth at most two, the weighted book thickness can always be realized by a $2$-page book embedding
and such an embedding can be found in linear time.
Moreover, we prove that it is NP-complete to decide if the weighted book thickness is at most~$k$, for some given integer~$k$.
\end{abstract}

\section{Introduction}

Planar graphs have many useful properties. Of course, not all graphs are planar. But  
there exist various approaches to reduce a nonplanar graph to one or multiple planar graphs and, more generally, to measure its distance to planarity.
One such approach
uses the notion of \emph{graph thickness},
that is, the minimum number of planar graphs into which the edges of a graph can be decomposed.
Initiated by Tutte~\cite{tutte1963thickness} and Kainen~\cite{kainen1973thickness},
there has been a great deal of work on graph thickness and several refinements, such
as \emph{geometric thickness}~\cite{DBLP:conf/gd/DillencourtEH98,dillencourt2004geometric,DBLP:journals/comgeo/Duncan11} and \emph{book thickness}~\cite{bernhart1979book}.
In the geometric setting, the graphs of the decomposition
must admit plane straight-line drawings where the same vertices are arranged at the same positions in all drawings.
The book thickness setting additionally requires the vertices to be placed in convex position.\footnote{The notion of vertices arranged on a line, which we describe next, is equivalent to convex positions.}
Relevant results include the asymptotic non-equivalence of graph thickness and geometric thickness~\cite{eppstein2002separating} as well the asymptotic non-equivalence of
geometric thickness and book thickness~\cite{eppstein2001separating}.
See also two surveys on graph thickness~\cite{maekinen2012survey,DBLP:journals/gc/MutzelOS98}.

In a \emph{book embedding} (also known as a \emph{stack layout}) of a graph, all vertices are aligned on a horizontal line called \emph{spine} and the edges are partitioned into subsets.
Each edge is then drawn as a semicircle that is contained entirely within one  half-plane defined by the spine, called \emph{page}, such that two edges in the same page are not allowed to cross. Note that pages may coincide if edges are colored by page and monochromatic crossings are avoided. The book thickness of such an embedding is equal to the number of its pages and the book thickness of a graph is the minimum book thickness of all its book embeddings.

A general upper bound on the book thickness of an $n$-vertex graph is $\lceil n/2 \rceil$~\cite{bernhart1979book}.
Yannakakis has shown in the 1980s that any planar graph has book thickness at most~4~\cite{DBLP:conf/stoc/Yannakakis86,DBLP:journals/jcss/Yannakakis89}, which has been shown to be tight recently~\cite{DBLP:journals/jocg/KaufmannBKPRU20,DBLP:journals/jctb/Yannakakis20}.
Any 2-tree has book thickness at most~2~\cite{DBLP:journals/algorithmica/GiacomoDLW06,DBLP:conf/cocoon/RengarajanM95}, while planar 3-trees have book thickness at most~3~\cite{DBLP:conf/focs/Heath84}.
More general, graphs of treewidth~$k$ have book thickness at most~$k+1$~\cite{DBLP:conf/gd/DujmovicW05,DBLP:journals/dcg/DujmovicW07,DBLP:journals/dam/GanleyH01}.
Determining whether the book thickness of a graph is at most~$k$ is NP-complete already for~$k = 2$ by reduction from Hamiltonicity on planar graphs~\cite{CLR87}.
Related types of embeddings have also been investigated~\cite{DBLP:conf/gd/BekosFKKKR22,DBLP:conf/cccg/BekosKPR23,DBLP:conf/wads/GiacomoDFU025,DBLP:journals/comgeo/GiacomoDLW05,DBLP:journals/siamcomp/HeathR92,DBLP:conf/gd/Pupyrev17}.

Motivation for studying geometric thickness and book thickness can be found in VLSI design~\cite{CLR87}, in which the vertices represent components of a circuit and the wires represent connections between them. The wires are assigned to the minimum number of layers such that on each layer the corresponding part of the circuit is crossing-free. In addition, the models could be used to model the routing of pipes and cables on different depths underground in civil engineering applications. Next to these modeling applications, book embeddings also have several applications in graph drawing, where two of the standard visualization styles for graphs, arc diagrams~\cite{chktw-adfdht-15,chktw-adfdht-18,thesisFoerster,DBLP:conf/infovis/Wattenberg02} and circular layouts~\cite{DBLP:conf/wg/BaurB04,DBLP:phd/dnb/Gronemann15}, can be constructed using book embeddings.

In this paper, we introduce the \emph{weighted book thickness} of graphs.
While classical book thickness is the minimum achievable number of pages in a book embedding, weighted book thickness is the minimum achievable average page number assigned to an edge in a book embedding.
Another way to interpret weighted book thickness is to consider different costs for placing edges on different pages; e.g.: 1€ per edge on Page~1, 2€ per edge on Page~2, and so on.
For example, in civil engineering, the cost of routing pipes  might be higher the deeper underground trenches need to be dug~\cite{10.1680/iicep.1972.5576,EGGIMANN2015218} while, in VLSI design, expensive semiconductor material might be used  only where it is actually needed to  route connections~\cite{DBLP:conf/iccad/Stow0SL17}.

For vertex coloring, the analogous concept
of taking the average instead of the maximum
has been studied under the name \emph{sum coloring}~\cite{DBLP:conf/approx/GiaroJKM02,DBLP:conf/random/HalldorssonKS01,DBLP:journals/arscom/Kubicka05,DBLP:conf/acm/KubickaS89,DBLP:journals/orl/Marx05,DBLP:journals/tcad/Supowit87},
that is, for a graph $G = (V, E)$, find
a function $c: V \rightarrow \mathbb{N}^+$
such that for each $\{u, v\} \in E \colon c(u) \ne c(v)$
and $\sum_{v \in V} c(v)$ is minimized.
In general, this problem is NP-hard
but it becomes polynomial-time solvable
for trees, pseudotrees, and outerplanar graphs
\cite{DBLP:journals/arscom/Kubicka05,DBLP:conf/acm/KubickaS89}.
Obtaining a sum coloring sometimes requires
more distinct colors than the chromatic number;
the number of additional colors
may be arbitrarily large~\cite{erdoes-sum-coloring}.
It remains NP-hard on interval graphs~\cite{DBLP:journals/orl/Marx05},
but there are approximations~\cite{DBLP:conf/approx/GiaroJKM02,DBLP:conf/random/HalldorssonKS01}.

\subparagraph*{Our contribution.}
We first give a simple upper bound on the
weighted book thickness (\cref{p:upper})
and a lower bound for planar 3-trees (\cref{p:lower}).
Then, we evaluate the computational complexity with a similar result as for the (classical) book thickness --
determining the weighted book thickness of a graph is NP-complete (\cref{thm:np-hard}).
Our main contribution is to investigate whether the book thickness and weighted book thickness can be separated. Indeed, we show that  the embeddings realizing weighted book thickness can differ from those realizing (classical) book thickness.
We show that, although every planar graph on at most nine vertices admits a $2$-page book embedding realizing its weighted book thickness (\cref{thm:max-nine-vtcs}),
already for ten vertices, there is a planar graph of pathwidth~3 for which every realization of its weighted book thickness needs more pages than its book thickness (\cref{thm:3-path-distinct}).
Moreover, we prove that there is a $2$-tree whose weighted book thickness cannot be realized on two pages (\cref{thm:2tree}).
As a complementing positive result, we show that, for every graph of pathwidth at most~2,
the weighted book thickness can always be realized by a $2$-page book embedding
and such an embedding can be found in linear time (\cref{thm:pathwidth2}).

\section{Preliminaries}

\subparagraph{(Weighted) Book Embeddings.} 
A \emph{$k$-page book embedding} of a graph~$G=(V,E)$ is a pair~$(p,C)$, where~$p:E\to\{1,\ldots,k\}$ is a partition of the edges of~$G$ into~$k$ \emph{pages}, and~$C=(V,E_C)$ is a  cycle (whose edges are not necessarily in~$G$; we call this order of vertices also \emph{order on the spine})
such that for each color class~$E_i=p^{-1}(i)\subseteq E$, for~$i\in\{1,\ldots,k\}$, the graph~$G_i=(V,E_i\cup E_C)$ is outerplanar with outer cycle~$C$.
The \emph{book thickness}~\cite{bernhart1979book,Oll73}  $\bt(G)$ of~$G$ (also called \emph{stack number}) is the minimum~$k\ge 1$ for which there exists a~$k$-page book embedding of~$G$.
The \emph{weight} of a book embedding~$(p,C)$ of a graph~$G=(V,E)$ is defined by~$w(p,C)=\sum_{e\in E}p(e)$.
The \emph{weighted book thickness on $k$ pages} of~$G$ is defined as
\begin{equation*}
    \wbt_k(G)= \begin{cases}
        \infty ~~~~~~~~~~~~~~\,\text{, if } \mathrm{bt}(G) > k \\
        \min\limits_{(p,C)} \frac{w(p,C)}{|E|} ~~~\text{, otherwise}
    \end{cases}
\end{equation*}\stepcounter{linenumber}where the minimum is taken over each $k$-page book embedding~$(p, C)$ of~$G$.
Moreover, we define the \emph{weighted book thickness} of~$G$ as
\begin{equation*}
    \wbt(G)=\min_{k\in\N}\wbt_k(G).
\end{equation*}
While book thickness considers the total number of pages only,
the weighted version provides a model
that measures the average page number. As a result, it requires a more careful analysis of the assignment of individual edges to pages; see e.g., \cref{fig:ten}.

The graphs admitting $1$-page book embeddings are exactly the outerplanar graphs~\cite{bernhart1979book}. This family of graphs can also be defined by the following two forbidden substructures:
\begin{theorem}[{\cite[Theorem 1]{Chartrand1967PlanarPG}}]
    \label{thm:outerplanar}
    A graph $G$ is outerplanar if and only if it contains no subdivision of $K_{2,3}$ or $K_4$.
\end{theorem}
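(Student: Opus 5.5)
The plan is to reduce the statement to Kuratowski's theorem by adding an apex vertex. For a graph~$G$, let~$G^+$ be obtained from~$G$ by adding a new vertex~$v$ adjacent to every vertex of~$G$. The first step is to show that $G$ is outerplanar if and only if $G^+$ is planar.
\begin{itemize}
\item If $G$ has a plane drawing with all vertices on the outer face, place~$v$ in the outer face. Each vertex can then be joined to~$v$ without crossings, routing the new edges in the cyclic order in which the vertices appear on the outer face boundary.
\item Conversely, take a plane drawing of~$G^+$ and delete~$v$ with its incident edges. The face that contained~$v$ is now incident to every vertex of~$G$. Re-embed, for instance by a stereographic projection, so that this face becomes the outer face.
\end{itemize}
I expect this equivalence to be the main technical point. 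Some care is needed to say precisely what ``all vertices on the outer face'' means when $G$ is disconnected or not $2$-connected, but a face-based definition handles these cases uniformly.

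\textbf{Necessity.} Outerplanarity is preserved under taking subgraphs. It is also preserved under suppressing a vertex of degree~$2$: replacing a path $a$--$x$--$b$ by the edge $ab$ keeps the drawing plane and keeps all vertices on the outer face. Hence, if $G$ contains a subdivision of $K_4$ or $K_{2,3}$ and is outerplanar, then $K_4$ or $K_{2,3}$ itself would be outerplanar. This is impossible by the apex criterion, since $K_4^+=K_5$ and $K_{2,3}^+$ contains $K_{3,3}$ (the apex joins the part of size two), and neither is planar.

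\textbf{Sufficiency.} Suppose $G$ is not outerplanar. Then $G^+$ is nonplanar, so by Kuratowski's theorem $G^+$ contains a subdivision~$H$ of $K_5$ or $K_{3,3}$. I distinguish three cases according to the role of~$v$ in~$H$.
\begin{itemize}
\item If $v\notin H$, then $H\subseteq G$. Deleting one branch vertex of~$H$ together with its incident paths leaves a subdivision of $K_4$ (from $K_5$) or of $K_{2,3}$ (from $K_{3,3}$).
\item If $v$ is a branch vertex of~$H$, delete it together with its incident paths. The same deletion argument gives a subdivision of $K_4$ or $K_{2,3}$ inside~$G$.
\item If $v$ is an internal vertex of a subdivided edge between branch vertices $a$ and~$b$, then $H-v$ is a subdivision of $K_5-ab$ or $K_{3,3}-ab$. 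Deleting the branch vertex~$a$ leaves a subdivision of $K_4$ or $K_{2,3}$ in~$G$, respectively.
\end{itemize}
In every case $G$ contains one of the forbidden subdivisions, which completes the proof.
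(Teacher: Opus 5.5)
Your proof is correct: the apex equivalence ($G$ is outerplanar iff $G^+$ is planar), combined with Kuratowski's theorem and your three-case analysis of where $v$ sits in the Kuratowski subdivision, is sound. The paper gives no proof of its own and simply cites Chartrand and Harary, and your reduction to Kuratowski is the standard argument for this characterization (essentially the classical one for the cited result). One simplification: for necessity, you can extend a subdivision of $K_4$ or $K_{2,3}$ in $G$ by the apex to a subdivision of $K_5$ or $K_{3,3}$ in $G^+$ directly, which avoids the degree-$2$ suppression step.
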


\begin{figure}[t]
\centering
\includegraphics[page=1]{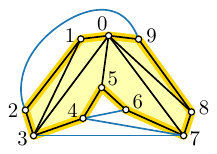}\hspace{48pt}\includegraphics[page=2]{figures/example.pdf}\\
\includegraphics[page=2]{ten}\hspace{48pt}\includegraphics[page=1]{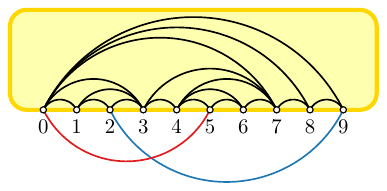}
\caption{The planar graph~$G$ from \cref{thm:3-path-distinct}, for which~$\wbt(G)=\wbt_3(G)<\wbt_2(G)$. Edges assigned to Pages $1$, $2$ and $3$ are colored black, blue and red, respectively. The Hamiltonian cycle $C$ inducing the spine orders is colored yellow.
\textbf{(top)}~A plane embedding of $G$ with a planar Hamiltonian cycle $C$ such that all but four edges are contained within $C$. This gives rise to a $2$-page book embedding with weighted book thickness $23/19$.
\textbf{(bottom)}~An embedding of $G$ with a crossing such that cycle $C$ is a planar Hamiltonian cycle for all edges except for edge $(0,5)$. This embedding and $C$ induce a $3$-page book embedding with weighted book thickness $22/19$.  
}
\label{fig:ten}
\end{figure}

\subparagraph{Subhamiltonian Graphs.}
A graph is \emph{subhamiltonian} if it is a subgraph of a planar Hamiltonian graph, that is, a planar graph that contains a Hamiltonian cycle. It is well known that all planar graphs on up to ten vertices are subhamiltonian, and on~$11$ vertices there is one exception only: the so-called Goldner-Harary graph~\cite{GH75}.
Moreover, all graphs that admit a~$2$-page book embedding are subhamiltonian. Every such graph~$G$ is planar, as we can draw the two edge sets~$E_1$ and~$E_2$ on different sides of the cycle~$C$ to obtain a plane embedding. Adding the edges of~$C$---if not present in~$G$ already---we obtain a Hamiltonian plane graph.%

A \emph{planar subhamiltonian cycle} of a plane graph~$G$ is a cycle through all vertices of~$G$ that may use edges and non-edges of~$G$, such that the non-edges may be added to the embedding of~$G$ without crossings. Observe that such a cycle $C$ defines a $2$-page book embedding where, w.l.o.g., the edges within $C$ are assigned to Page~$1$ and the cycles outside of $C$ are assigned to Page~$2$; see \cref{fig:ten} (top) for an example. 

This notion of a certificate cycle $C$ for $2$-page embeddability can also be extended to a \emph{certificate cycle for $3$-page embeddability} as follows: $C$ must be a planar subhamiltonian cycle for the subgraph induced by the first two pages and the third page includes the edges crossing~$C$ and the edges crossing an edge from one of the first two pages.
Moreover, each pair of such edges on the third page must not have their endpoints interleaving in a walk along~$C$ (this would correspond to a crossing in the book embedding).
For a small number of edges on Page~$3$, the validity is straight-forward to check.
For an example, refer to \cref{fig:ten} (bottom).

\subparagraph{Tree and Path Decompositions.} A \emph{tree decomposition} of a graph $G=(V,E)$ is a $2$-tuple $(T,\mathcal{B})$ where
\begin{itemize}
    \item $T=(V_T,E_T)$ is a tree with $V_T=\mathcal{B}$,
    \item each $B\in \mathcal{B}$ is a subset of $V$, called a \emph{bag}, so that $\bigcup_{B \in \mathcal{B}}B=V$,
    \item for each edge $uv \in E$, there is a bag $B \in \mathcal{B}$ so that $u\in B$ and $v\in B$,
    \item for each vertex $v \in V$, the elements of $\mathcal{B}$ containing $v$ induce a connected subgraph of $T$.
\end{itemize}
Moreover, a \emph{path decomposition} of $G$ is a tree decomposition $(P,\mathcal{B})$ where $P$ is a path. 

The \emph{width} of tree decomposition $(T,\mathcal{B})$ is equal to the maximum cardinality of any element of $\mathcal{B}$ minus $1$, i.e., $\mathrm{width}(T,\mathcal{B})=\max_{B\in \mathcal{B}}\{|B|-1\}$. The \emph{treewidth} $\mathrm{tw}(G)$ of graph $G$ is the minimum width of any of its tree decompositions. Moreover, the \emph{pathwidth} $\mathrm{pw}(G)$ of a graph $G$ is the minimum width of any of its path decompositions. 
The maximal graphs with treewidth~$k$ are the so-called $k$-trees.
A \emph{$k$-tree} is a graph that can be obtained by
starting with the complete graph $K_{k+1}$
and repeatedly adding a vertex~$v$ of degree exactly~$k$
such that neighbors of $v$ form a clique. 
Observe that every 2-tree is planar.

\section{Upper and Lower Bounds}

We first observe the following upper bound on the weighted book thickness:

\begin{proposition}\label{p:upper}
  For every graph~$G$ with~$\bt(G)\le k$, we have $\wbt(G)\le(k+1)/2$.
\end{proposition}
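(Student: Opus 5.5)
We use an averaging argument over page permutations. Fix a $k$-page book embedding $(p,C)$ of $G$, which exists because $\bt(G)\le k$. Write $E_i=p^{-1}(i)$ and $m_i=|E_i|$, so that $\sum_i m_i=|E|$.

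The definition constrains the page assignment in two ways. Each $G_i=(V,E_i\cup E_C)$ must be outerplanar with outer cycle $C$, and this condition is symmetric in the labels. In addition, edges of $G$ lying on $C$ must go to Page~$1$. Since adding an edge of $C$ to any page never creates a crossing, we may relabel the pages by an arbitrary permutation $\pi$ of $\{1,\dots,k\}$ and then move all edges of $E\cap C$ to Page~$1$. This yields a valid $k$-page book embedding $(p_\pi,C)$. Moving those edges to Page~$1$ can only decrease the weight.

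Ignoring that final move, the permuted embedding has weight $\sum_i \pi(i)\,m_i$. We use the cyclic shifts $\pi_j(i)=((i+j-1)\bmod k)+1$ for $j=0,\dots,k-1$. Under these shifts each class $E_i$ occupies every page exactly once, so
\[
\sum_{j=0}^{k-1} w(p_{\pi_j},C)\le \sum_{i=1}^{k} m_i\sum_{t=1}^{k} t=|E|\cdot\frac{k(k+1)}{2}.
\]
Some shift therefore achieves weight at most $|E|(k+1)/2$. This gives $\wbt_k(G)\le (k+1)/2$, and hence $\wbt(G)\le (k+1)/2$.

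A simpler alternative is to sort the pages so that $m_1\ge m_2\ge\dots\ge m_k$. Chebyshev's sum inequality then gives $\sum_i i\,m_i\le \frac{1}{k}\bigl(\sum_i i\bigr)\bigl(\sum_i m_i\bigr)=|E|(k+1)/2$ directly.

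The only delicate step is checking that relabeling pages is legal under the paper's convention that $E\cap C\subseteq E_1$. This is handled by the observation that cycle edges can be placed on any page, Page~$1$ in particular, without violating outerplanarity, and that doing so does not increase the weight. If $E=\emptyset$, the average is undefined and the statement is vacuous; we treat graphs with at least one edge.
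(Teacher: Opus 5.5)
Your proof is correct. Your ``simpler alternative'' is essentially the paper's proof. The paper reorders the pages so that $|E_1|\ge|E_2|\ge\dots\ge|E_k|$ and then simply asserts that the worst case is when every page carries exactly $m/k$ edges. Your appeal to Chebyshev's sum inequality is exactly the justification that informal step needs.

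Your main argument, averaging over the $k$ cyclic relabelings, reaches the same bound without sorting or any rearrangement-type inequality. Since each class $E_i$ lands on every page exactly once over the shifts, the mean weight of the shifted embeddings is at most $|E|(k+1)/2$, so some shift attains it. The trade-off is that the sorted relabeling is the best of all page permutations (by the rearrangement inequality). Averaging only certifies that some shift meets the mean, which makes no difference for the stated bound.

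You also explicitly check that relabeling respects the convention $E\cap C\subseteq E_1$ from the abstract. Moving edges of $E\cap C$ back to Page~$1$ leaves every $G_i=(V,E_i\cup E_C)$ unchanged and cannot increase the weight. The paper passes over this point silently.
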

\begin{proof}
    Consider any $k$-page book embedding~$(p,C)$ of~$G$, and permute the order of the pages in~$p$ so that~$|E_i|$ is non-increasing in~$i$.
    In the worst case, every page has exactly $m/k$ edges and its edge weights sum to $mi/k$, where~$m$ denotes the number of edges in~$G$. This yields:
    \begin{equation*}
        \wbt(G) \le \frac{1}{m} \sum_{i=1}^{k}\frac{mi}{k} = \sum_{i=1}^{k}\frac{i}{k} = \frac{k+1}{2}. \qedhere
    \end{equation*}
\end{proof}

\noindent Planar~$3$-trees have book thickness at most three~\cite{DBLP:conf/focs/Heath84}, so by \cref{p:upper} their weighted book thickness is at most two. Below we give a lower bound of~$13/9\approx 1.44$.

\begin{proposition}\label{p:lower}
  There exists an infinite family~$G_i$ of planar~$3$-trees with~$\wbt(G_i)\ge 13/9-o(1)$.
\end{proposition}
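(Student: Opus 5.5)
We use the family $G_i$ obtained from a planar $3$-tree $T_i$ on $n'$ vertices by stacking one new degree-$3$ vertex (a \emph{leaf}) into every one of its $2n'-4$ triangular faces. Then $G_i$ is again a planar $3$-tree with $n=3n'-4$ vertices and $m=3n-6$ edges. Its set $L$ of leaves is an independent set of size $|L|=2n'-4=\tfrac23 n-o(n)$.

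Fix an arbitrary book embedding $(p,C)$ of $G_i$ and let $E_1,E_2,\dots$ be its pages. Every edge has page at least $1$, edges on Page $2$ have page $2$, and all other edges have page at least $3$. Hence
\begin{equation*}
w(p,C)\;\ge\;|E_1|+2|E_2|+3\bigl(m-|E_1|-|E_2|\bigr)\;=\;3m-|E_1|-|E_1\cup E_2|.
\end{equation*}
The plan is to prove two upper bounds. The first is $|E_1|\le 2n-3$, which holds because Page~$1$ is outerplanar, as noted for $1$-page embeddings in the preliminaries. The second, and the heart of the proof, is $|E_1\cup E_2|\le 3n-\tfrac n3+o(n)$. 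Plugging both into the display gives $w(p,C)\ge 9n-2n-\tfrac{8n}{3}-o(n)=\tfrac{13n}{3}-o(n)$. Dividing by $m\le 3n$ yields $\wbt(G_i)\ge 13/9-o(1)$, independent of the number of pages used.

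For the second bound, let $H=(V,E_1\cup E_2)$. As observed in the paper, $H\cup C$ is a plane Hamiltonian graph, with $E_1$ drawn inside $C$ and $E_2$ outside. Let $F$ be the set of edges of $G_i$ not in $H$. I want to show $|F|\ge |L|-(n-|L|)-O(1)=\tfrac n3-o(n)$.

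Consider a leaf $v$ placed in triangle $abc$ all of whose three edges $va,vb,vc$ are in $H$. We take the plane embedding of $H\cup C$ restricted to $G_i$; this is the unique embedding, since $3$-trees are $3$-connected, or more carefully it is some embedding in which $v$ keeps its three edges. In that embedding the only faces at $v$ are bounded by triangles $vab$, $vbc$ and $vca$. Provided those triangles are still faces of $H$, the two $C$-edges at $v$ must go to vertices of $\{a,b,c\}$, all of which are non-leaves. So every \emph{intact} leaf (one with all incident triangles intact in $H$) has both its $C$-neighbours among non-leaves. Since leaves are independent and $C$ is a cycle, the intact leaves and the non-leaves can be charged along $C$: the number of intact leaves is at most the number of non-leaves, up to one extra unit per leaf that is not intact.

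Each edge of $F$ lies on $O(1)$ triangles, so it can destroy intactness of only $O(1)$ leaves. I expect a constant of $1$ is attainable with a more careful local argument: removing one edge merges faces and frees one extra $C$-passage. This gives $|L|-|F|\le n-|L|+O(1)$, hence $|F|\ge 2|L|-n-O(1)=\tfrac n3-o(n)$, as desired.

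The main obstacle is exactly this charging step. First, the embedding of $H\cup C$ need not agree with the embedding of $G_i$ once edges are dropped. Second, I must make sure that each dropped edge creates at most one additional way for $C$ to connect two leaves, or to pass from a leaf to a non-neighbour, so that the constant is really $1$. If only a weaker constant comes out, I would fall back to choosing $T_i$ with more stacking levels, to push $|L|/n$ higher and compensate.
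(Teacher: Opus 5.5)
Your reduction is the same as the paper's. From $w(p,C)\ge 3m-|E_1|-|E_1\cup E_2|$, $|E_1|\le 2n-3$ and $|E_1\cup E_2|\le m-|F|$, everything hinges on one fact: roughly $n/3$ edges must be deleted from $G_i$ to make it subhamiltonian. The paper does not prove this; it cites Cardinal et al.\ (at least $(n_i-8)/3$ deletions for their planar $3$-trees). You try to prove it yourself, but leave the central step open, and the charging you sketch does not work as stated:
\begin{itemize}
\item $H=(V,E_1\cup E_2)$ need not be $3$-connected, so the embedding of $H\cup C$ need not be inherited from $G_i$. A piece of $H$ attached only at a $2$-cut $\{a,b\}$ can be re-embedded inside the triangle $vab$ of an ``intact'' leaf $v$, giving $v$ leaf neighbours on $C$.
\item Even with an inherited embedding, deleting an edge $ab$ of $T_i$ destroys the intactness of two leaves. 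Your charging therefore only yields $|F|\ge (2|L|-n)/2$.
\end{itemize}
The constant really must be $1$: any $c>1$ gives a bound strictly below $13/9$. Your fallback of more stacking levels cannot compensate. In any triangulation an independent set $L$ satisfies $3|L|\le 2n-4$, because the bipartite plane graph between $L$ and $V\setminus L$ has at most $2n-4$ edges. Your family already attains this.

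The missing idea is a global edge count instead of local charging.
\begin{itemize}
\item Since $H\cup C$ is a simple plane graph, $|E(H)|+|E(C)\setminus E(H)|\le 3n-6=|E(G_i)|=|E(H)|+|F|$. So at most $|F|$ edges of $C$ are not edges of $H$.
\item Every leaf--leaf edge of $C$ is such an edge, because $L$ is independent in $G_i$.
\item Let $k$ be the number of leaf--leaf edges of $C$. Double counting the two $C$-edges at each vertex gives $2|L|-2k\le 2(n-|L|)$, so $k\ge 2|L|-n$.
\end{itemize}
Hence $|F|\ge 2|L|-n=n'-4=(n-8)/3$, no matter how $H$ is embedded. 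Plugging this into your display gives $w(p,C)\ge (13n-35)/3$, i.e., $\mathrm{wbt}(G_i)\ge 13/9-1/(n-2)$. This is exactly the paper's bound, and your family is essentially the one of Cardinal et al. With this step in place, your argument becomes a self-contained version of the paper's proof.
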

\begin{proof}
   Cardinal, Hoffmann, Kusters, T{\'o}th, and Wettstein~\cite{chktw-adfdht-15,chktw-adfdht-18} exhibit a family~$G_i$, for~$i\ge 1$, of planar~$3$-trees such that~$G_i$ has~$n_i=3i+8$ vertices and at least~$(n_i-8)/3=i$ edges have to removed from~$G_i$ in order to obtain a subhamiltonian planar graph. Consider any book embedding of~$G_i$. 
   Note that~$G_i$ has~$3n_i-6=9i+18$ edges, of which at most~$2n_i-3=6i+13$ are on Page~$1$ because every page induces an outerplane subgraph of~$G_i$. Of the remaining at least~$3i+5$ edges, at least~$i$ are on Page~$3$ (or higher). Therefore, we can bound
   \[
   \wbt(G_i)\ge \frac{6i+13+2(2i+5)+3i}{9i+18}=\frac{13i+23}{9i+18}=\frac{13}{9}-\frac{1}{3(i+2)}\,.\qedhere
   \]
\end{proof}

\section{Computational Complexity}
It is a natural question to ask for the computational complexity of determining the weighted book thickness of a graph.
Similar to other versions of book thickness, it is NP-complete.

\begin{theorem}
\label{thm:np-hard}
    Given a graph $G$, it is NP-complete to 
    decide if the weighted book thickness~$\wbt(G)$ is at most~$k$, for some given integer~$k$.
    This holds even if $G$ is a planar triangulation.
\end{theorem}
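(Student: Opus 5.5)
The plan is a reduction from Hamiltonian Cycle in planar triangulations (maximal planar graphs), which is known to be NP-complete (Wigderson). The threshold will be a rational number of polynomial size. I read ``at most~$k$'' as allowing a rational bound given as a fraction of polynomial size; with a strictly integer bound the question becomes trivial, for example by \cref{p:upper}.

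\textbf{Membership in NP.} A certificate consists of a cyclic order~$C$ of~$V$ together with a page assignment~$p$. We may assume $p$ uses at most $|E|$ pages, since empty pages can be removed by relabelling, which only decreases the weight. Checking that every page is crossing-free with respect to~$C$ takes polynomial time: test every pair of edges on the same page for interleaving endpoints. Computing $w(p,C)/|E|$ and comparing it with the bound is also polynomial.

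\textbf{Hardness.} Let $G$ be a planar triangulation with $n\ge 3$ vertices, so $m=3n-6$. The key step is a universal lower bound.

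In any book embedding $(p,C)$ of~$G$, the graph $G_1=(V,E_1\cup E_C)$ is outerplanar. Hence $|E_1|\le 2n-3$. Every other edge has weight at least~$2$. Therefore
\[
w(p,C)\ \ge\ |E_1|+2(m-|E_1|)\ =\ 2m-|E_1|\ \ge\ 2m-(2n-3)\ =\ 4n-9 .
\]
Set the threshold to $(4n-9)/(3n-6)$. We then show that $\wbt(G)\le(4n-9)/(3n-6)$ if and only if $G$ is Hamiltonian.

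\emph{($\Leftarrow$)} Let $C$ be a Hamiltonian cycle of~$G$ in its plane embedding. Triangulating the interior and the exterior of~$C$ uses $n-3$ chords on each side. Put the $n$ edges of~$C$ and the $n-3$ interior chords on Page~$1$, and the $n-3$ exterior chords on Page~$2$. The weight is $(2n-3)+2(n-3)=4n-9$.

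\emph{($\Rightarrow$)} Equality in the chain above forces $|E_1|=2n-3$ and forces every remaining edge onto Page~$2$. So $(p,C)$ is a $2$-page book embedding. As noted in the preliminaries, $G\cup C$ is then planar. Since $G$ is maximal planar, no non-edge can be added while preserving planarity, so every edge of~$C$ already belongs to~$G$. Thus $C$ is a Hamiltonian cycle of~$G$.

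The only step needing care is this last maximality argument. It should be spelled out: the two pages drawn on opposite sides of the spine, together with the spine cycle, give a plane drawing of $G\cup C$, and a triangulation admits no additional edge in a simple plane graph. The reduction clearly runs in polynomial time, and the instance $G$ is itself a planar triangulation, which gives the stated strengthening.
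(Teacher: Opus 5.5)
Your proof is correct and follows the paper's route: a reduction from Hamiltonicity of planar triangulations with total-weight threshold $4n-9$ (equivalently $\wbt(G)\le(4n-9)/(3n-6)$, which is how the paper's ``integer $k$'' is actually used). Your equality argument in the ($\Rightarrow$) direction is the contrapositive of the paper's observation that a non-Hamiltonian triangulation forces an edge onto Page~$3$ and hence weight at least $4n-8$, and your maximal-planarity step spells out why a $2$-page embedding of a triangulation yields a Hamiltonian cycle.

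One quibble concerns your aside that an integer bound would make the question trivial by \cref{p:upper}. For triangulations that proposition only gives $\wbt(G)\le 5/2$ (as $\bt(G)\le 4$), so it does not settle $k=2$, and for general graphs $\wbt$ is unbounded. This remark plays no role in your argument.
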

\begin{proof}
    The problem is in NP:  given an order of the vertices on the spine and assignment of the edges to pages,
    the weighted book thickness can be verified in polynomial time.

    To show NP-hardness, we reduce from the NP-hard problem Hamiltonicity on triangulations~\cite{wigderson1982hamiltonian}.
    Namely, we show that a planar triangulation~$G$ with $n$ vertices
    is Hamiltonian if and only if its minimum weighted book embedding has a weight of $4n-9$.
    Recall that a graph admits a $2$-page book embedding if and only if it is subhamiltonian.
    
    First assume that $G$ is Hamiltonian. Then, it admits a 2-page book embedding.
    Since $G$ is a triangulation, it has $3n-6$ edges.
    Since each of the two pages induces an outerplane graph, they share $n$ edges, $n-1$ that follow the order on the spine and $1$ between the first and the last vertex.
    These edges can be assigned to Page~$1$.
    The remaining $2n-6$ edges must be assigned to Pages~$1$ and~$2$, namely, 
    $n-3$ edges each. Thus, we have a weight of $4n-9$.

    Second, assume that $G$ is not Hamiltonian. Then, there is no $2$-page book embedding and we must assign at least one edge to Page~$3$.
    Since Page~$1$ contains at most $2n-3$ edges and Page~$2$ at most $n-4$ edges,
    we get a weight of at least $2n-3 + 2 \cdot (n-4) + 3 = 4n-8>4n-9$.
\end{proof}

\section{Planar Graphs of Pathwidth~3}\label{sec:pathwidth3}
Any book embedding of a graph~$G$ necessarily uses at least~$\bt(G)$ pages. As increasing the number of pages also increases the contribution of the edges on these pages in the weight of the embedding, one might think that considering~$\bt(G)$ pages might also be sufficient to obtain an embedding with optimum weighted book thickness, that is, such that~$\wbt(G)=\wbt_{\bt(G)}(G)$. 
In this section, we show that this is not the case in general. 
We show that there is a graph on ten vertices whose weighted book thickness cannot be realized on two pages, even though it has book thickness two.

\begin{theorem}
\label{thm:3-path-distinct}
  There is a planar graph~$G$ on~$10$ vertices of pathwidth~$3$ with~$22/19=\wbt(G)<\wbt_2(G)=23/19$.
\end{theorem}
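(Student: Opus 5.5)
We take the graph $G$ of \cref{fig:ten}: ten vertices and $m=19$ edges. The figure shows a $2$-page book embedding of weight $23$ and a $3$-page book embedding of weight $22$. So the proof has three parts: check both embeddings, prove that every $2$-page embedding has weight at least $23$, and prove that every book embedding has weight at least $22$. Together these give $\wbt(G)=22/19<23/19=\wbt_2(G)$.

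\textbf{Upper bounds.} For the top drawing we check that the yellow cycle $C$ is a planar subhamiltonian cycle. All but four edges lie inside $C$ and go to Page~1; the four outside edges go to Page~2. The weight is $15+2\cdot 4=23$.

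For the bottom drawing we check that $C$ is a certificate cycle for $3$-page embeddability. Only the edge $(0,5)$ goes to Page~3. Two edges go to Page~2, and the remaining $16$ edges to Page~1. The weight is $16+2\cdot2+3=23$?

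This does not come out to $22$, so the split must be different. With one edge on Page~3, weight $22$ needs $17$ edges on Page~1 and one on Page~2: $17+2+3=22$. We read the exact split off the figure and verify it. We also verify that the Page~1 edges together with $C$ form an outerplane graph, which is possible because $17\le 2n-3$.

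To show that $G$ has pathwidth~3, we give an explicit path decomposition with bags of size~4. We then exhibit a $K_4$-minor-like obstruction to pathwidth~2, or cite a forbidden-minor argument.

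\textbf{Lower bound $22$ for all embeddings.} Page~1 holds an outerplane graph with outer cycle $C$, so it has at most $2n-3=17$ edges. If Page~1 has exactly $17$ edges, it is a maximal outerplanar graph, and the other two edges cost at least $2+2=4$, for a total of $21$.

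So we must rule out the following: some spanning maximal outerplanar subgraph $H$ of $G$, together with a Hamiltonian cycle of $H$, leaves two edges that do not cross each other with respect to $C$. If no such configuration exists, the two remaining edges either cross or one of them is on Page~3. Either way the weight is at least $17+2+3=22$. Having fewer than $17$ edges on Page~1 also costs at least $16+2\cdot3=22$.

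To rule out the bad configuration, we design $G$ with a structural obstruction. One option is that every spanning maximal outerplanar subgraph forces the two leftover edges to interleave along its Hamiltonian cycle. This would be verified by a case analysis over maximal outerplanar spanning subgraphs, pruned by degree and triangle arguments, or by computer enumeration as in the supplement.

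\textbf{Lower bound $23$ on two pages.} On two pages, $C$ is a planar Hamiltonian cycle of the planar graph $G\cup C$. Page~1 holds the chords inside $C$ plus the edges of $C\cap E$, and Page~2 holds the chords outside. The weight is $19+|E_2|$, so we need $|E_2|\ge4$ for every planar subhamiltonian cycle.

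The plan is to fix the planar embedding, using (near) $3$-connectivity to limit it to a few embeddings. We then argue that every face of the augmented graph that $C$ passes through forces at least four edges onto the outside. This can be done by counting: the inside holds at most $2n-3$ minus the cycle non-edges, so we bound the number of $C$-edges that are missing from $G$. The main obstacle is this exhaustive argument over all Hamiltonian cycles. I would first try to reduce it through the separating triangles and the path structure of $G$, and fall back on computer verification if that fails.
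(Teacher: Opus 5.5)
\textbf{The central lower bound is missing.} Your upper bounds and the pathwidth part follow the paper. The one nontrivial step, however, is never carried out: you do not prove that every $2$-page book embedding of $G$ has at least four edges on Page~$2$, i.e.\ $\wbt_2(G)\ge 23/19$. Your plan ends with ``fall back on computer verification,'' and the weight-$21$ configuration is likewise left to ``a case analysis \dots\ or computer enumeration.'' That restates the goal; it is not an argument.

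The missing idea is to combine your counting with the concrete face structure of $G$:
\begin{itemize}
\item $G$ is $3$-connected, so by Whitney's theorem its plane embedding $\Gamma$ is unique, and every $2$-page embedding must respect it.
\item If at most three edges are on Page~$2$, then Page~$1$ has at least $16$ edges. Your ``$2n-3$ minus the non-edges of $C$'' count then shows that at most one edge of $C$ is a non-edge of $G$.
\item Hence at most one face of $\Gamma$ is split by $C$, every other face lies entirely on one side of $C$, and the faces on each side form a subtree of the dual $G^*$.
\item The seven triangular faces of $\Gamma$ form three groups in $G^*$, separated by quadrilaterals or a pentagon.
\item If $C\subseteq G$, the inner faces of $G_1$ are seven or eight faces of $\Gamma$, all triangles except at most one quadrilateral, and they form a subtree of $G^*$. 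But any such subtree containing six triangles must also contain two quadrilaterals or a pentagon, a contradiction.
\item If $C$ splits a face, $G_1$ needs eight triangular faces. Only two triangle groups (at most five triangles) plus one triangle from the split face are available, a contradiction.
\end{itemize}

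\textbf{The separate lower bound of $22$ is unnecessary.} An embedding with some edge on Page~$3$ or higher has weight at least $|E_1|+2(19-|E_1|)+1\ge 38-17+1=22$. Weight $21$ would force $17$ edges on Page~$1$ and two on Page~$2$, i.e.\ a $2$-page embedding, which the bound above already excludes. So everything reduces to the single claim you left open.

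\textbf{Minor points.} $G$ is fixed by the figure, so there is nothing to ``design.'' Its $3$-page embedding puts $17$ edges on Page~$1$, $\{2,9\}$ on Page~$2$ and $\{0,5\}$ on Page~$3$.
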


\begin{proof}
    Let~$G$ be the planar graph with~$10$ vertices and~$19$ edges depicted in \cref{fig:ten}.
    It has pathwidth at most~$3$ because it
    admits a path decomposition $(P,\mathcal{B})$ of width~$3$ where $P=(B_1,B_2,B_3,B_4,B_5,B_6,B_7)$ and $B_1=\{1,2,3,9\}$, $B_2=\{0,1,3,9\}$, $B_3=\{0,3,8,9\}$, $B_4=\{0,3,7,8\}$, $B_5=\{0,3,4,7\}$, $B_6=\{0,4,5,7\}$ and $B_7=\{4,5,6,7\}$.
    Moreover, $G$ has pathwidth at least~$3$
    because it contains $K_4$ as a minor,
    which has pathwidth~$3$, and pathwidth is closed under taking minors.
    Hence, the pathwidth of~$G$ is exactly~$3$.
    
    To see that~$\wbt(G)\le 22/19$, consider the drawing shown in \cref{fig:ten}~(bottom):
    The black edges form a maximal outerplane graph, which we put onto Page~$1$, and the remaining two edges~$\{2,9\}$ and~$\{0,5\}$ are put onto Pages~$2$ and~$3$, respectively. The weight of the resulting~$3$-page book embedding is~$17\cdot 1+1\cdot 2+1\cdot 3=22$. Among all book embeddings of~$G$ that use three or more pages, this is best possible because Page~$2$ and Page~$3$ each contain the minimum of one edge. Then, $\wbt(G)=22/19$ follows from~$\wbt_2(G)\ge 23/19$, which we will show next.
    
    Consider a~$2$-page book embedding~$(p,C)$ of~$G$. We claim that~$p$ assigns at least four edges to Page~$2$, which implies~$\wbt_2(G)\ge (15 + 4 \cdot 2)/19 = 23/19$.
    Observe that~$G$ is~$3$-connected and, therefore, by Whitney's Theorem~\cite{Whitney19323connected} it has a unique plane embedding~$\Gamma$ (on the sphere), up to mirroring, shown in \cref{fig:ten}~(top).
    Every~$2$-page book embedding of~$G$ must respect~$\Gamma$, so this also holds for~$(p,C)$. In particular, all edges of~$C$ are either in~$G$ or their union with~$\Gamma$ is plane. The edges on Page~1 form an outerplane graph on~$10$ vertices, which has at most~$2\cdot 10-3=17$ edges
    and at most $10 - 2 = 8$ inner faces.
    If at most three edges are on Page~2, then at least~$16$ edges are on Page~1.
    Thus, in~$(p,C)$ the graph~$G_1$ (i.e., the graph of Page~1) is a maximal outerplane graph minus at most one edge.
    In particular, at most one edge of~$C$ is not an edge of~$G$.
    So at most one face of~$\Gamma$ is split by an edge of~$C$ and thus contributes to both sides of~$C$. All other faces lie on either side of~$C$, and all faces on the same side of~$C$ form a subtree of the dual graph~$G^*$ of~$\Gamma$.
    Further, as~$C$ is a subhamiltonian cycle of~$G$, every vertex is incident to (at least one face of) both sides of~$C$.
    When speak of faces of~$G_1$,
    we refer to the inner faces derived from~$\Gamma$
    when considering the edges of~$G_1$.
    
    If~$C\subset G$, then $G_1$ has seven or eight faces.
    All of these faces except maybe one are triangles.
    Moreover, they are also inner faces of~$\Gamma$, and they form a subtree of~$G^*$.
    Observe that $\Gamma$ has seven triangular faces, split into three groups in~$G^*$, separated by either a quadrilateral or a pentagon.
    Thus, every subgraph of~$G^*$ that contains six or more triangles also contains at least two quadrilaterals or one pentagon, neither of which is possible for the faces of~$G_1$.
    
    Otherwise, the cycle~$C$ splits one quadrilateral or pentagonal face of~$\Gamma$ into two;
    as the two resulting faces are on different sides of~$C$, at most one of them is also a face of~$G_1$.
    Furthermore, $G_1$ has eight faces, which are all triangles.
    Thus, $G_1$ comprises at most two of the three groups of triangles of~$G$, that is, at most five triangles of~$G$, plus at most one more triangle resulting from the face split, which is less than eight.
    
    Therefore, every~$2$-page book embedding~$(p,C)$ of~$G$ has at least four edges on Page~2, for a weight of at least~$15+4\cdot 2=23$. 
\end{proof}

In fact, it turns out that our separation example in \cref{thm:3-path-distinct} is the smallest-order example -- independent of pathwidth, as we show in the following theorem:

\begin{theorem}
    \label{thm:max-nine-vtcs}
    For every planar graph~$G$ on~$n\le 9$ vertices we have~$\wbt(G)=\wbt_2(G)$, that is, there exists a~$2$-page book embedding that realizes~$\wbt(G)$.
\end{theorem}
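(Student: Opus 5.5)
The proof combines a reduction to a weight comparison with a computer-aided search; the supplement mentions that \cref{thm:max-nine-vtcs} is computer-aided. Since every planar graph on at most ten vertices is subhamiltonian, every planar $G$ with $n\le 9$ vertices satisfies $\bt(G)\le 2$. It therefore suffices to show that no book embedding with three or more pages has smaller weight than the best $2$-page one. Outerplanar graphs are trivial ($\wbt=1$ on one page), so assume $\bt(G)=2$.

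First I reduce to a comparison of edge counts on the low pages. Fix a $k$-page embedding $(p,C)$ with $k\ge 3$. Consider the $2$-page embedding obtained by keeping the same spine and moving every edge from Pages~$\ge 2$ to Page~$2$. This is usually invalid, so instead I argue by weight directly.

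Let $m_1$ be the maximum number of edges of an outerplane subgraph compatible with some spine order. The weight of any embedding is at least $m_1+2(m-m_1)+(\text{number of edges on Page }\ge 3)$. A $2$-page embedding with $a$ edges on Page~$1$ has weight $a+2(m-a)=2m-a$.

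So a $\ge 3$-page embedding beats every $2$-page one only in the following situation. Some spine order has Page~$1$ holding $a'$ edges, Page~$2$ holding $b'$ edges and Page $\ge3$ holding $c'\ge 1$ edges. Its weight is at least $a'+2b'+3c'$, and this must be strictly less than $2m-a^*$, where $a^*$ is the largest Page~$1$ size achievable in a $2$-page embedding. Since $a'+b'+c'=m$, this requires $a'-a^*\ge c'+1$. In other words, the cost $c'$ of the extra pages must be outweighed by a Page~$1$ that exceeds the best $2$-page Page~$1$ by at least $c'+1$ edges.

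The main step is an exhaustive computer check over all planar graphs on at most $9$ vertices, generated with plantri or nauty's geng plus a planarity filter. It suffices to handle edge-maximal cases carefully, but monotonicity is not clear, so I would simply enumerate all of them.

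For each graph I compute $a^*$ with a SAT or ILP formulation:
\begin{itemize}
\item variables encode a cyclic vertex order and page assignments;
\item crossing constraints forbid two edges on the same page whose endpoints interleave;
\item the objective maximizes the number of Page~$1$ edges subject to all edges fitting on two pages.
\end{itemize}
I also compute $\wbt_k$ directly, minimizing $\sum p(e)$ with up to $k=3$ or $4$ pages. By \cref{p:upper}-type counting and the bound above, more pages can never help once $c'$ exceeds $m_1-a^*-1$, which bounds the $k$ that needs checking. A cheaper pre-filter handles most graphs: if $a^*=\min(m,2n-3)$, so that a $2$-page embedding already has a maximal Page~$1$, nothing needs checking. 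This case covers all graphs that contain a Hamiltonian maximal outerplanar spanning subgraph placed compatibly.

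The expected obstacle is making the search feasible and trustworthy. Planar graphs on $9$ vertices number in the hundreds of thousands, and exact ILPs over vertex orders are costly. I would first compute the cheap certificate $a^*=2n-3$ by a DFS over Hamiltonian cycles of planar supergraphs. The expensive exact minimization would then run only on the few remaining graphs, with the result cross-validated by two independent solvers.
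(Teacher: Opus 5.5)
Your plan is correct and follows essentially the paper's route: an exhaustive computer check over small planar graphs, combined with the same trade-off the paper uses, namely that putting $c'$ edges on Pages $\ge 3$ only pays off if Page~1 gains at least $c'+1$ edges over the best $2$-page embedding (your inequality $a'-a^*\ge c'+1$). The paper implements it somewhat more economically: it restricts to biconnected graphs, finds the best $2$-page embedding by enumerating removal sets $F$ of increasing size together with outerplane embeddings of $G\setminus F$, tests a single Page-3 edge $e$ via a $2$-page embedding of $G\setminus e$ with at most $\mu-3$ Page-2 edges, and rules out two or more high-page edges with the computed bound $\mu\le 6$, whereas you use SAT/ILP with the $m_1$-based bound on the number of pages.
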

\begin{proof}
    We prove the statement using brute-force computation; our source code is available at \supplementlink.
    We generate all biconnected planar graphs on~$n\le 9$ vertices. For each such graph~$G=(V,E)$, we~iteratively consider all sets~$F\subset E$ of~$\mu$ edges, for~$\mu=0,1,\ldots$, and test if~$G\setminus F$ is outerplanar. If so, we go over all possible outerplane embeddings~$\Gamma$ of~$G\setminus F$ and test if~$F\cup C$ is outerplanar, where~$C$ is the outer cycle of~$\Gamma$. If this is the case, we found a~$2$-page embedding~$(p,C)$ of~$G$, where~$p(e)=2$, for~$e\in F$, and~$p(e)=1$, otherwise. Given that we have already considered all possible~$2$-page embeddings with strictly fewer edges on Page~2 before, the embedding~$(p,C)$ minimizes~$\wbt(G)$ among all embeddings using at most two pages (one page corresponds to~$\mu=0$). As~$G$ is subhamiltonian planar, the procedure terminates for some~$\mu\le|E|/2$. 
    
    It remains to consider possible improvements using embeddings with three or more pages.
    In order to improve on the best~$2$-page embedding found, for any~$x$ edge(s) moved to Page~$3$ (or higher), we need to move at least~$x+1$ edges from Page~$2$ to Page~$1$. So for any single edge~$e$ to be put on Page~$3$, we test if~$G\setminus e$ admits a~$2$-page embedding with at most~$\mu-3\ge 1$ edge(s) on Page~$2$. No such embedding is found for any planar graph on~$n\le 9$ vertices. For two edges~$e,f$ to be put on Page~$3$, we have to find a~$2$-page embedding of~$G\setminus\{e,f\}$ with at most~$\mu-5\ge 2$ edges on Page~$2$.
    As it turns out, every planar graph on~$n\le 9$ vertices admits a~$2$-page embedding with at most six edges on Page~$2$. Therefore, there is no need to test for embeddings with two or more edges on Page~$3$ or higher.  
\end{proof}

\section{2-Trees}
\label{sec:2trees}

\begin{figure}
    \centering
    \begin{subfigure}{0.3\textwidth}
    \includegraphics[width=\textwidth,page=4]{figures/2tree.pdf}
    \subcaption{}
    \label{fig:2tree:construction:1}
    \end{subfigure}
    \hfill
    \begin{subfigure}{0.65\textwidth}
    \includegraphics[width=\textwidth,page=1]{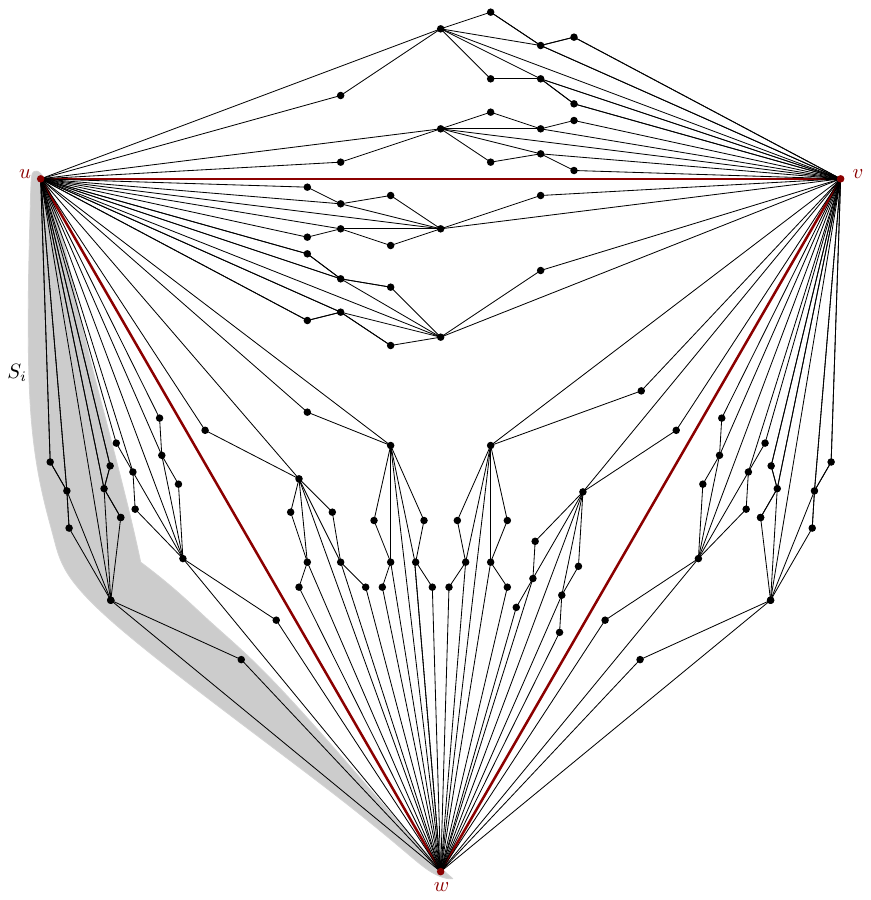}
    \subcaption{}
    \label{fig:2tree:construction:2}
    \end{subfigure}
    \caption{(a)~A sunflower graph $S$ and (b)~the 2-tree $G$ used in our proof of \cref{thm:2tree}.}
    \label{fig:2tree:construction}
\end{figure}

After considering graphs with pathwidth~3, we continue with graphs of treewidth~2.
Recall that the edge-maximal graphs with treewidth~2 are 2-trees,
and all 2-trees are planar.
The remainder of this section is devoted to proving the following theorem:

\begin{theorem}
\label{thm:2tree}
    There is a 2-tree $G$, with $|V|=99$ and $|E|=195$ such that $220/195  = \wbt_3(G) < \wbt_2(G)=221/195$.
\end{theorem}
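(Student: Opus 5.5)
A 2-tree on $n=99$ vertices has $2n-3=195$ edges, and the target values correspond to total weights $220$ and $221$. So I would argue in terms of the number of edges that leave Page~$1$. Page~$1$ holds at most $2n-3=195$ edges, and a $2$-tree is outerplanar only in degenerate cases, so some edges must go elsewhere. The construction glues copies of the sunflower graph $S$ of \cref{fig:2tree:construction:1} onto a common base, as in \cref{fig:2tree:construction:2}. The gadget $S$ is chosen so that any book embedding forces a fixed number of its edges off Page~$1$. The goal is to show the following two facts:
\begin{itemize}
\item every $2$-page book embedding puts at least $26$ edges on Page~$2$ (weight $195+26=221$);
\item there is a $3$-page embedding with $24$ edges on Page~$2$ and one on Page~$3$ (weight $195+24+2\cdot 1=221-1=220$).
\end{itemize}
The exact numbers depend on the construction. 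The mechanism, as in \cref{thm:3-path-distinct}, is that allowing a single edge on Page~$3$ lets Page~$1$ be a much larger outerplanar subgraph.

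For the upper bounds, I would exhibit explicit embeddings. One is a $2$-page embedding of weight $221$, which shows $\wbt_2(G)\le 221/195$. The other is a $3$-page embedding of weight $220$, which shows $\wbt_3(G)\le 220/195$. Both can be given by a spine order and page assignment in a figure, and checking them is routine. For $\wbt_3(G)\ge 220/195$ and $\wbt(G)=\wbt_3(G)$, I would use the counting from \cref{p:lower}. Any embedding using Page~$3$ or higher must still have at least the forced number of non-Page-$1$ edges, because the maximum outerplanar subgraph of $G$ has bounded size, and each of those edges costs at least~$2$. This gives the lower bound once the maximum outerplane subgraph size is known. It also shows that using Page~$4$ or higher, or more than one edge on Page~$3$, never helps.

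The main obstacle is the lower bound $\wbt_2(G)\ge 221/195$. This says that no $2$-page embedding can make Page~$2$ as small as the $3$-page solution makes Pages~$2$ and~$3$ combined. Here I would follow the paper's approach and make the argument computer-aided, in the spirit of \cref{thm:max-nine-vtcs}. A brute force over all $99$ vertices is infeasible, so I would decompose instead. A $2$-page embedding corresponds to a planar subhamiltonian cycle $C$ together with an embedding of $G$. The choice then splits into local configurations:
\begin{itemize}
\item how $C$ passes through each sunflower gadget;
\item on which side of $C$ each petal lies;
\item which face of the base the gadget is embedded in.
\end{itemize}
For each gadget I would enumerate its finitely many interface types with the base, meaning the order in which $C$ enters and leaves through the attachment edge, and compute by exhaustive search the minimum number of Page~$2$ edges for each type. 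A small combinatorial argument, or a dynamic program over the base structure, then shows the minimum total is $26$. The crucial point is that gadgets sharing a base vertex cannot all attain their cheapest interface type simultaneously, because of the limited rotation at the shared vertices.
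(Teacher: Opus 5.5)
\textbf{Gap 1: the $\wbt_3$ side.} Your witness for $\wbt_3(G)\le 220/195$ does not have weight $220$. With $a,b,c$ edges on Pages~1,~2,~3 the weight is $a+2b+3c=195+b+2c$. So $24$ edges on Page~2 and one on Page~3 give $195+24+2=221$, which is no better than two pages. The paper's embedding $L^*$ has $a=172$, $b=21$, $c=2$.

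More seriously, the lower bound $\wbt_3(G)\ge 220/195$ cannot come from the counting of \cref{p:lower}. The maximum outerplanar subgraph of $G$ has $172$ edges (\cref{lem:sparse-component}, attained by $L^*$), so counting only yields $172+2\cdot 23=218$. This is intrinsic to the theorem, not a matter of plugging in the right numbers. A weight-$220$ embedding that uses Page~3 has $b+c=25-c\le 24$, hence $a\ge 171$, so counting can never certify more than $390-171=219$.

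Splitting by $c$: the case $c\ge 2$ does follow from counting (weight $\ge 218+c$), and $c=0$ is your $2$-page bound. But $c=1$ with $a=172$, $b=22$ has weight $219$ and must be excluded by a separate argument. The paper does this with a second computer experiment (\cref{prop:3page-embeddability}). For every admissible $172$-edge Page-1 pattern from \cref{lem:sparse-component}, the SAT instance asking for $22$ edges on Page~2 and one on Page~3 is unsatisfiable. Your proposal has no counterpart to this step. In fact, your stated mechanism, that a single Page-3 edge frees up Page~1, is exactly the configuration that has to be ruled out.

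\textbf{Gap 2: the $2$-page lower bound.} For $\wbt_2(G)\ge 221/195$ you propose a different decomposition from the paper's, but as sketched it is not yet sound. The pair $\{x,y\}$ separates a sunflower from the rest of $G$, but not from the rest of $G\cup C$. The cycle $C$ may use non-edges between internal vertices of two sunflowers that share a face, so it can leave and re-enter a gadget several times. Hence ``the order in which $C$ enters and leaves through the attachment edge'' is not a complete interface. The obstruction you name, limited rotation at shared vertices, is also only asserted. The cross-gadget interaction the paper actually establishes is the $K_{2,3}$ subdivision that appears once every $G^{xy}[E_1]$ keeps an $x$--$y$ path.

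The paper avoids enumerating embeddings and decomposes the Page-1 edge set instead. It uses $K_{2,3}$ hitting arguments (\cref{lem:dense-component,lem:sparse-component}) to prove $|E_1|\le 172$ and to reduce the case $|E_1|\ge 170$ to a short list of missing-edge patterns, each of which is then checked by SAT. You need that bound of $172$ anyway for your $c\ge 2$ counting, so this structural step is the missing core of your plan.
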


\subparagraph*{Construction of Graph $G$.}~Graph $G$ consists of several subgraphs, each isomorphic to the following \emph{sunflower graph} $S$. The graph $S$ consists of $10$ vertices and $16$ edges composed of two $2$-connected components defined as follows; see~\cref{fig:2tree:construction:1}:   
\begin{itemize}
    \item An $8$-cycle $(s_1,s_2,s_3,s_4,s_5,s_6,s_7,s_8)$ with five chords $s_1s_3$, $s_1s_5$, $s_1s_7$, $s_3s_5$, and $s_7s_5$.
    \item A $3$-cycle $(s_5,s_9,s_{10})$. Note that $s_5$ is shared with the previous $2$-connected component.
\end{itemize}
Based on $S$ we construct $G$ as follows. Let $\Delta =(u,v,w)$ be a triangle. For each edge $xy$ ($x,y\in\{u,v,w\}$ with $x\neq y$) of $\Delta$, we attach four copies $S_i^{xy}, i=1,2,3,4$,
by identifying the vertices $x$ and $y$ with the vertices $s_1$ and $s_{10}$ of each copy, as described below; see \cref{fig:2tree:construction:2}:
\begin{itemize}
    \item For $S_1$ and $S_2$ identify their copies of $s_1$ with $x$ and their copies of $s_{10}$ with $y$.
    \item For $S_3$ and $S_4$ identify their copies of $s_1$ with $y$ and their copies of $s_{10}$ with $x$.
\end{itemize}
This completes the construction of graph $G$. Observe that $G$ has $n=3+3\cdot4\cdot8=99$ vertices and $m=3+3\cdot4\cdot16=195$ edges. We show the following:

\begin{restatable}{rlemma}{twotreetreewidth}
    \label{prop:2tree-treewidth}
    Graph $G$ has treewidth~$2$.
\end{restatable}

\begin{proof}
    We show that $G$ has indeed treewidth $2$. To this end, we first describe the tree decomposition of $S$; see \cref{fig:2tree:treewidth:1}:
\begin{itemize}
    \item The root of the tree decomposition $\mathcal{T}_S$ of $S$ is the bag $B_S=\{s_1, s_5, s_{10}\}$. Observe that this bag contains both endpoints of the edges $s_1s_5$ and $s_1s_{10}$. 
    \item The bag $B_S$ has three children: $\{s_1,s_3,s_5\}$, $\{s_5,s_9,s_{10}\}$ and $\{s_1,s_7,s_5\}$. These bags contain both endpoints of the edges $s_1s_3$, $s_3s_5$; $s_5s_9$, $s_9s_{10}$; and $s_1s_7$, $s_7s_5$, respectively.
    \item The two children of $\{s_1,s_3,s_5\}$ are the bags $\{s_1,s_2,s_3\}$ and $\{s_3,s_4,s_5\}$ containing both endpoints of the edges $s_1s_2$, $s_2s_3$ and $s_3s_4$, $s_4s_5$, respectively.
    \item Similarly, the two children of $\{s_1,s_7,s_5\}$ are the bags $\{s_1,s_8,s_7\}$ and $\{s_7,s_6,s_5\}$ containing both endpoints of the edges $s_1s_8$, $s_8s_7$ and $s_7s_6$, $s_6s_5$, respectively.
\end{itemize}
It follows that for every edge of $S$, both endpoints appear together in at least one bag of $\mathcal{T}_S$. Moreover, it is straightforward to verify that, for each vertex $s_i$ with $i\in \{1,\ldots,10\}$, the set of bags containing $s_i$ induces a connected subtree of $\mathcal{T}_S$. Hence, $\mathcal{T}_S$ is a valid tree decomposition of $S$ of width $2$, rooted at $B_S$.
Based on $\mathcal{T}_S$ we now construct a tree decomposition $\mathcal{T}$ of $G$; see \cref{fig:2tree:treewidth:2}:

\begin{itemize}
    \item The root of $\mathcal{T}$ is a bag $B_r$ containing the three vertices of $\Delta$, i.e., $u$, $v$ and $w$. In particular, it contains both endpoints of the edges $uv$, $uw$ and $vw$.
    \item For each subgraph $S_{i}^{xy}$ isomorphic to $S$ (i.e., $i \in \{1,...,4\}$ and $x,y\in\{u,v,w\}$, $x\neq y$), we obtain the tree decomposition $\mathcal{T}_{S_i^{xy}}$ isomorphic to $\mathcal{T}_S$ as discussed above. It has the bag $B_{S_i^{xy}}$ containing the two vertices $x$ and $y$. We make the bag $B_{S_i^{xy}}$ a child of $B_r$.
\end{itemize}
Since $B_r$ contains the three edges not belonging to any copy of $S$, $\mathcal{T}$ contains all vertices and the two endpoints of each edge in at least one bag. Moreover, each bag has width at most $2$. Finally, the bag $B_r$ guarantees that the subtrees induced by the bags containing each vertex are connected. Namely, two copies ${S_i^{xy}}$ and ${S_j^{ab}}$ of $S$ are vertex-disjoint except for at most two of the vertices $u$, $v$ and $w$, which are all contained both in $B_r$ as well as in $B_{S_i^{xy}}$ and $B_{S_j^{ab}}$. Thus, $\mathcal{T}$ is a valid tree decomposition of $G$ of width $2$.
\end{proof}

\begin{figure}
    \centering
    \begin{subfigure}{0.6\textwidth}
    \includegraphics[width=\textwidth,page=5]{figures/2tree.pdf}
    \subcaption{}
    \label{fig:2tree:treewidth:1}
    \end{subfigure}
    \hfill
    \begin{subfigure}{0.35\textwidth}
    \includegraphics[width=\textwidth,page=6]{figures/2tree.pdf}
    \subcaption{}
    \label{fig:2tree:treewidth:2}
    \end{subfigure}
    \caption{Tree decompositions of width $2$ for (a)~the sunflower graph $S$ and (b)~the two-tree $G$.}
    \label{fig:2tree:treewidth}
\end{figure}

\subparagraph*{(Near-)Maximal Outerplanar Subgraphs.} We next consider necessary properties of a large-cardinality edge set $E_1^*\subset E(G)$ that can be assigned to the first page in any book embedding. All our proofs rely on the fact that the induced subgraph $G[E_1^*]$ must be an outerplanar graph~\cite{bernhart1979book} as otherwise it would not admit a $1$-page book embedding, allowing us to argue using \cref{thm:outerplanar}. 

We first consider the restriction of $E_1^*$ to the subgraph $G^{xy}$ induced by the edge $xy$ and the four sunflower subgraph $S_1^{xy}$, $S_2^{xy}$, $S_3^{xy}$ and $S_4^{xy}$. In the following, we will denote the copy of vertex $s_j$ in subgraph $S_i^{xy}$ by $s_{i,j}^{xy}$.

\begin{lemma}
    \label{lem:dense-component}
    Let $G^{xy}$ be the subgraph of $G$ induced by the edge $xy \in \{uv,uw,vw\}$ and the four sunflower subgraph $S_1^{xy}$, $S_2^{xy}$, $S_3^{xy}$ and $S_4^{xy}$. Moreover, let $E_1^{xy}$ be a  subset of edges $E^{xy}$ of $G^{xy}$ that induce an outerplanar subgraph. Then, $|E_1^{xy}|\leq 58 = |E^ {xy}|-7$. Moreover, if $|E_1^{xy}| \geq 56$, the missing edges include the following:
    \begin{enumerate}[label=\textsf{\textbf{\textcolor{lipicsGray}{\upshape{{\arabic*}.}}}},ref=\textsf{\textbf{\textcolor{lipicsGray}{\upshape{{\arabic*}.}}}}]
        \item\label{prop:dense:1} $s_{1,5}^{xy}y, s_{3,5}^{xy}x \notin E_1^{xy}$ (2 missing edges)
        \item\label{prop:dense:2} either $s_{1,5}^{xy}s_{1,9}^{xy} \notin E_1^{xy}$ or $s_{1,9}^{xy}y \notin E_1^{xy}$ (1 missing edge)
        
        \item\label{prop:dense:3} either $s_{3,5}^{xy}s_{3,9}^{xy} \notin E_1^{xy}$ or $s_{3,9}^{xy}x \notin E_1^{xy}$ (1 missing edge)
        
        \item\label{prop:dense:4} $s_{2,5}^{xy}y \notin E_1^{xy}$ or $s_{4,5}^{xy}x \notin E_1^{xy}$ (at least 1 missing edge)

        \item\label{prop:dense:5} either $xy \notin E_1^{xy}$ or both $s_{2,5}^{xy}y \notin E_1^{xy}$ and $s_{4,5}^{xy}x$ (1 missing edge)
        \end{enumerate}
        \vspace{-17pt}
        \begin{enumerate}[label=\textsf{\textbf{\textcolor{lipicsGray}{\upshape{6{\alph*}.}}}},ref=\textsf{\textbf{\textcolor{lipicsGray}{\upshape{6{\alph*}.}}}}]
        \item\label{prop:dense:6a}
        if $s_{2,5}^{xy}y \notin E_1^{xy}$, then also either $s_{2,5}^{xy}s_{2,9}^{xy} \notin E_1^{xy}$ or $s_{2,9}^{xy}y \notin E_1^{xy}$ (1 missing edge)
         \item\label{prop:dense:6b} if $s_{4,5}^{xy}x \notin E_1^{xy}$, then also either $s_{4,5}^{xy}s_{4,9}^{xy} \notin E_1^{xy}$ or $s_{4,9}^{xy}x \notin E_1^{xy}$ (1 missing edge). \linkhere 
        \end{enumerate}
\end{lemma}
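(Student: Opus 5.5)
The plan is a cut-counting argument based on \cref{thm:outerplanar}: a subdivision of $K_{2,3}$ is forbidden, so no two vertices may be joined by three internally disjoint paths that, together with a fourth connection, form such a configuration. First I record the edge count. $G^{xy}$ has $1+4\cdot 16=65$ edges, so the bound to prove is that at least $7$ edges are missing.

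Next I analyse a single copy $S_i^{xy}$ with its attachment vertices $a=s_{i,1}^{xy}$ and $b=s_{i,10}^{xy}$; here $\{a,b\}=\{x,y\}$. The block on $s_1,\dots,s_8$ is maximal outerplanar ($8+5=13=2\cdot 8-3$ edges). If this block is intact, $a$ and $s_{i,5}^{xy}$ are joined by three internally disjoint paths: the chord $s_1s_5$, the path through $s_3$, and the path through $s_7$. The vertex $s_{i,5}^{xy}$ reaches $b$ via the edge $s_5s_{10}$ or via $s_9$.

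\textbf{Key observation.} Any further $a$--$b$ path outside $S_i^{xy}$, concatenated with an $s_5$--$b$ path, gives a fourth internally disjoint $a$--$s_5$ path. This yields a $K_{2,3}$ subdivision with branch vertices $a$ and $s_{i,5}^{xy}$. I will also show the following edge-cut facts:
\begin{itemize}
\item separating $a$ from $s_{i,5}^{xy}$ inside the block costs at least $3$ edges, since the local edge-connectivity is $3$;
\item separating $s_{i,5}^{xy}$ from $b$ costs exactly $2$ edges, namely $s_5s_{10}$ together with one of $s_5s_9$ and $s_9s_{10}$.
\end{itemize}
More generally, I will argue that any edge set making $S_i^{xy}$ together with an external $a$--$b$ path outerplanar must delete at least $2$ edges of $S_i^{xy}$. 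Equality holds only for the pattern ``$s_5s_{10}$ plus one edge of the triangle $s_5s_9s_{10}$''; every other pattern costs at least $3$.

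Then I do the global counting. There are five potential $a$--$b$ routes: the edge $xy$ and the four sunflowers. Call a sunflower \emph{linking} if it still contains an $x$--$y$ path. By the key observation, if two routes survive, then every linking sunflower must have its block broken, at a cost of at least $3$. A non-linking sunflower costs at least $2$, since it has to be cut from the other surviving route. I will also treat the configuration in which exactly one route survives. If that route is a sunflower with intact block, the remaining three sunflowers each cost $2$, and $xy$ must be deleted because of the same $K_{2,3}$ argument with branch vertices $x$ and $s_5$. This gives $6+1=7$. If $xy$ is the surviving route, all four sunflowers cost $2$ each, giving $8$. Configurations where some sunflower is linking only through a broken block cost at least $3$ for that sunflower and at least $2$ for each of the others. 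Summing over the cases gives at least $7$ missing edges, so $|E_1^{xy}|\le 58$.

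For the structural part I assume $|E_1^{xy}|\ge 56$, i.e.\ at most $9$ missing edges, so the slack over the minimum is at most $2$. Then at most one sunflower can deviate from the cheap pattern, and only by one extra edge. From the case analysis:
\begin{itemize}
\item at least one copy attached with $s_1=x$ and at least one attached with $s_1=y$ must use the cut ``$s_5s_{10}$ plus a triangle edge''; naming these copies $S_1^{xy}$ and $S_3^{xy}$ gives properties~\ref{prop:dense:1}--\ref{prop:dense:3};
\item the remaining two copies $S_2^{xy}$ and $S_4^{xy}$ cannot both keep their $s_5$--$s_{10}$ connection together with $xy$, which gives~\ref{prop:dense:4} and~\ref{prop:dense:5};
\item whenever $s_5s_{10}$ is removed without the copy becoming the unique survivor, the triangle edge must also go, which gives~\ref{prop:dense:6a} and~\ref{prop:dense:6b}.
\end{itemize}
This is where the opposite orientation of $S_3,S_4$ matters, since those copies have $s_5$ adjacent to $x$.

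The main obstacle is the middle step. I must rigorously prove that every ``partial'' destruction of a sunflower costs at least $3$ unless it is the canonical $2$-edge cut. This requires ruling out cheaper patterns, for example deleting the chord $s_1s_5$ plus a single edge, where a $K_4$ subdivision rather than a $K_{2,3}$ subdivision might be the relevant obstruction. I would handle it by an exhaustive small case check on the $10$-vertex sunflower plus one external $a$--$b$ path.
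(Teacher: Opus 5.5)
\textbf{The gap.} The step that fails is your claim that ``$s_5s_{10}$ plus a triangle edge'' is the only $2$-edge repair of a copy $S_i^{xy}$ closed up by an external $a$--$b$ path. The obstruction there is a $K_{2,3}$ on $a,s_5$: three internally disjoint $a$--$s_5$ paths of length at least two, one through each of the $s_3$-side, the $s_7$-side and the $s_{10}$-side. It suffices to disconnect any one side, and each side has an $a$--$s_5$ cut of size $2$. For example, delete $s_1s_2,s_1s_3$ (or $s_3s_5,s_4s_5$, or the mirror images on the $s_7$-side). What remains is the block on $s_1,s_5,s_6,s_7,s_8$, with outer cycle $s_1s_8s_7s_6s_5$ and chords $s_1s_5,s_1s_7,s_5s_7$. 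Together with the triangle $s_5s_9s_{10}$ and the external path it stays outerplanar, and the copy remains linking. Incidentally, the local edge-connectivity of $s_1,s_5$ in the block is $5$, not $3$; only the vertex version is $3$. Your planned exhaustive check would expose these patterns at once.

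\textbf{The bound survives.} For $|E_1^{xy}|\le 58$ this error is harmless. Use ``at least $2$'' instead of ``at least $3$'' for a linking copy coexisting with another route, and use that at most two $x$--$y$ routes of length $\ge 2$ can survive. Your count then still gives at least $7$ deletions, with equality only for one intact copy, three canonical cuts, and $xy$ deleted.

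\textbf{The structural part is false as stated.} Keep $xy$ and delete $s_{1,5}y$, $s_{1,5}s_{1,9}$, $s_{2,5}y$, $s_{2,5}s_{2,9}$. In $S_3,S_4$ (where $s_1=y$) delete $ys_{3,2}$, $ys_{3,3}$, $ys_{4,2}$, $ys_{4,3}$. The block containing $x,y$ then has outer cycle $x,s_{3,9},s_{3,5},s_{3,6},s_{3,7},s_{3,8},y,s_{4,8},s_{4,7},s_{4,6},s_{4,5},s_{4,9}$. Its remaining edges are $xs_{3,5},ys_{3,5},ys_{3,7},s_{3,5}s_{3,7}$, their $S_4$ counterparts, and $xy$, and these are pairwise non-crossing chords. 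Everything else is outerplanar and hangs off cut vertices: the blocks of $S_1,S_2$ at $x$, the pendant edges $s_{1,9}y,s_{2,9}y$, and the leftovers on $s_{3,2},s_{3,3},s_{3,4}$ and on $s_{4,2},s_{4,3},s_{4,4}$. So $65-8=57$ edges induce an outerplanar graph, yet no copy with $s_1=y$ misses $s_5s_{10}$. Property~1 therefore fails under every symmetry of $G^{xy}$, and Property~5 fails as well. Similarly, canonical cuts in $S_1,S_2,S_3$, thinning $S_4$, and keeping $xy$ gives $57$ edges violating Property~5.

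Even apart from thinning, nothing in your argument forced the two cut copies to have opposite orientations; the $K_{2,3}$ on $\{x,y\}$ does not see orientation. The paper's proof has the same blind spot. In its Case~(i) it only allows deleting $xs_{3,5}$, $xs_{4,5}$ or a shared edge together with a subdivision edge, and in its last step only $xy$ or $xs_{4,5}$. What is actually provable is the bound $58$, plus a weaker structural statement that also admits these thinned copies.
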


\begin{proof}
    We iteratively apply \cref{thm:outerplanar} to identify subdivisions of $K_{2,3}$ for which at least one edge must be removed from the first page to guarantee that the first page induces an outerplanar subgraph of $G^{xy}$. When deciding, which edge of a sunflower subgraph $S$ of $G^{xy}$ to remove from Page~$1$, we make use of the structure of $S$: 
    There is one subdivision $(s_5,s_9,s_{10})$ of edge $s_5s_{10}$ while for $s_1s_5$ there are four edge-disjoint subdivisions: $(s_1,s_2,s_3,s_4,s_5)$, $(s_1,s_3,s_5)$, $(s_1,s_7,s_5)$, $(s_1,s_8,s_7,s_6,s_5)$; see \cref{fig:2tree:dense-component:1}.

\begin{figure}[t]
    \centering
    \begin{subfigure}{\textwidth}
    \includegraphics[width=\textwidth,page=7]{figures/2tree.pdf}
    \subcaption{}
    \label{fig:2tree:dense-component:1}
    \end{subfigure}
    \hfill
    \begin{subfigure}{\textwidth}
    \includegraphics[width=\textwidth,page=8]{figures/2tree.pdf}
    \subcaption{}
    \label{fig:2tree:dense-component:2}
    \end{subfigure}
    \caption{Illustrations for the proof of \cref{lem:dense-component}.}
    \label{fig:2tree:dense-component:firstpart}
\end{figure}
    
\begin{figure}[p]
    \begin{subfigure}{\textwidth}
    \includegraphics[width=\textwidth,page=9]{figures/2tree.pdf}
    \subcaption{}
    \label{fig:2tree:dense-component:3}
    \end{subfigure}

    \begin{subfigure}{\textwidth}
    \includegraphics[width=\textwidth,page=10]{figures/2tree.pdf}
    \subcaption{}
    \label{fig:2tree:dense-component:4}
    \end{subfigure}

    \begin{subfigure}{\textwidth}
    \includegraphics[width=\textwidth,page=11]{figures/2tree.pdf}
    \subcaption{}
    \label{fig:2tree:dense-component:5}
    \end{subfigure}
    \caption{Illustrations for the proof of \cref{lem:dense-component} (continued).}
    \label{fig:2tree:dense-component:secondpart}
\end{figure}
For the sake of simplicity, we omit the superscript $xy$ and write $s_{i,j}$ and $E_1$ instead of $s_{i,j}^{xy}$ and $E_1^{xy}$. Note that $G^{xy}$ consists of four copies of the sunflower graph $S$, each contributing $16$ edges, together with the edge $xy$. Hence, $|G^{xy}| = 4 \cdot 16 + 1 = 65$.
We first identify a forbidden subgraph. Observe that $G^{xy}$ contains a  subgraph isomorphic to $K_{2,4}$ where $x$ and $y$ are the part of cardinality $2$ and $s_{1,5}$, $s_{2,5}$, $s_{3,5}$ and $s_{4,5}$ are the part of cardinality $4$ (see \cref{fig:2tree:dense-component:1}). Hence, w.l.o.g., one of the edges $xs_{1,5}$ and $s_{1,5}y$ cannot belong to $E_1$. In fact, it is not sufficient if only one of these edges is not belonging to $E_1$ -- for edge $xs_{1,5}$ there are four edge-disjoint subdivisions $(x,s_{1,3},s_{1,5})$, $(x,s_{1,2},s_{1,3},s_{1,4},s_{1,5})$, $(x,s_{1,7},s_{1,5})$, $(x,s_{1,8},s_{1,7},s_{1,6},s_{1,5})$ and for $s_{1,5}y$ there is one subdivision $(s_{1,5},s_{1,9},y)$ for which we also are not allowed to add to $E_1$ one edge to destroy the induced $K_{2,4}$. Hence, if we decide that $xs_{1,5} \notin E_1$, we actually have to decide for $5$ edges -- entirely within $S_1$ -- to not be in $E_1$. Similarly, if we decide that $s_{1,5}y \notin E_1$, we actually have to decide for $2$ edges within $S_1$ ($s_{1,5}y$ and either $s_{1,5}s_{1,9}$ or $s_{1,9}y$) to not be in $E_1$.

    After resolving the $K_{2,4}$, we are still left with a forbidden subgraph isomorphic to $K_{2,3}$ where $x$ and $y$ are the part of cardinality $2$ and $s_{2,5}$, $s_{3,5}$ and $s_{4,5}$ are the part of cardinality $4$ (see \cref{fig:2tree:dense-component:2}). By the same argument as above, we may conclude that either 
    (i)~$2$ or $5$ edges  from $S_2$ should not belong to $E_1$ or (ii)~$2$ or $5$ edges from w.l.o.g. $S_3$ ($S_4$ is isomorphic) should not belong to $E_1$. At this point we observe that it is not possible that $5$ edges are missing both from $S_1$ and $S_2$ or $S_3$ since otherwise $|E_1|\leq 55$. Hence, by symmetry, we may assume w.l.o.g. that $2$ edges of either $S_2$ or $S_3$ must be missing in $E_1$.
    Next consider two subcases: (i)~not all edges of $S_2$ belong to $E_1$ and (ii)~not all edges of $S_3$ belong to $E_2$. 

    In Case~(i), we observe that there are still two subgraphs isomorphic to a $K_{2,3}$ subdivision, namely  (see also \cref{fig:2tree:dense-component:3}):
    \begin{itemize}
        \item $K_{2,3}^3$: Vertices $s_{3,5}$ and $y$ form the part of cardinality $2$ and $s_{3,3}$, $s_{3,7}$ and the path $(x,s_{4,5})$ form the part of cardinality $3$, and
        \item $K_{2,3}^4$: Vertices $s_{4,5}$ and $y$ form the part of cardinality $2$ and $s_{4,3}$, $s_{4,7}$ and the path $(x,s_{3,5})$ form the part of cardinality $3$.
    \end{itemize} We still have to resolve both $K_{2,3}^3$ and $K_{2,3}^4$. We observe that again, any edge that we can decide not to include within $E_1$ comes with a subdivision from which we have to exclude another edge. Thus, we must decide for two edges of $K_{2,3}^3$ (and its subdivision) and for two edges of $K_{2,3}^4$ (and its subdivision) to not be included in $E_1$ (potentially, this sets of edges coincide).     
    Since we have already identified four or seven edges to-not-be-included in $E_1$, we can only decide for  at most five more edges not to be included in $E_1$ and hence we are left with two options: (a)~we remove the two edges $xs_{3,5}$ and $xs_{4,5}$ and for each of them also one of their subdivision edges from $E_1$, or (b)~we decide for an edge (and one of its subdivision edges) in the intersection of $K_{2,3}^3$ and $K_{2,3}^4$. Observe that in Subcase~(a), the subgraph induced by the edges of $E_1$ is already outerplanar and we obtain one potential classification of the lemma -- more precisely, in this case, we can conclude that exactly two edges (and not five) from $S_1$ must be missing in $E_1$, as otherwise we have decided for in total $10$ edges to not belong to $E_1$. 
    In Subcase~(b), w.l.o.g. these edges are $xs_{3,5}$ and either one of $xs_{3,9}$ and $s_{3,5}s_{3,9}$. Observe that in this case, we can conclude that exactly two edges (and not five) from $S_1$ must be missing in $E_1$  as well since otherwise we would have already identified $9$ edges missing from $E_1$ without guaranteeing that the subgraph induced by $E_1$ is outerplanar.

    In Case~(ii), we observe that again there are still two subgraphs isomorphic to a $K_{2,3}$ subdivision, namely (see also \cref{fig:2tree:dense-component:4}):
    \begin{itemize}
        \item $K_{2,3}^2$: Vertices $s_{2,5}$ and $x$ form the part of cardinality $2$ and $s_{2,3}$, $s_{2,7}$ and the path $(y,s_{4,5})$ form the part of cardinality $3$, and
        \item $K_{2,3}^4$: Vertices $s_{4,5}$ and $y$ form the part of cardinality $2$ and $s_{4,3}$, $s_{4,7}$ and the path $(x,s_{2,5})$ form the part of cardinality $3$.
    \end{itemize} We conclude that we have to resolve both $K_{2,3}^2$ and $K_{2,3}^4$. We observe that once more, any edge that we can decide not to include within $E_1$ comes with a subdivision from which we have to exclude another edge. Thus, we must decide for two edges of $K_{2,3}^2$ (and its subdivision) and for two edges of $K_{2,3}^4$ (and its subdivision) to not be included in $E_1$. As in Case~(i), we have already identified four or seven edges to-not-be-included in $E_1$, and thus we can add at most five more (otherwise $|E_1| \leq 56$). Therefore -- once more -- we must decide for either: (a)~$xs_{4,5}$ and $ys_{2,5}$ and one of their subdivision or (b)~an edge (and one of its subdivision edges) in the intersection of $K_{2,3}^2$ and $K_{2,3}^4$. In Subcase~(a), we again obtain the same outerplanar subgraph as in Subcase~(a) of Case~(i) and conclude once more that exactly two edges (and not five) from $S_1$ must be missing in $E_1$. Moreover,  in Subcase (b), these edge are one of the following two choices: (b\textsubscript{1})~$xs_{2,5}$ and either one of $xs_{2,9}$ and $s_{2,5}s_{2,9}$ or (b\textsubscript{2})~$xs_{4,5}$ and either one of $xs_{4,9}$ and $s_{4,5}s_{4,9}$. Again, in both variants, exactly two edges (and not five) from $S_1$ must be missing in $E_1$ as otherwise we have already decided for nine edges to be missing from $E_1$ without yielding that the subgraph induced by $E_1$ is outerplanar.

    Summarizing the analysis so far, we have established the Constraints~\ref{prop:dense:1} to~\ref{prop:dense:4} and~\ref{prop:dense:6a} and~\ref{prop:dense:6b} and shown that if in Constraint~\ref{prop:dense:5} we decide for the case where both $s_{2,5}y \notin E_1$ and $xs_{4,5} \notin E_1$, then we obtain an outerplanar subgraph with at most $57$ edges.
    That is, to establish the lemma, it remains to argue that also $xy\notin E_1$ if $s_{2,5}y \in E_1$ or $xs_{4,5} \in E_1$. 
    
    To this end, assume w.l.o.g. that $S_4$ is the only sunflower subgraph for which we so far have not decided yet to exclude any edge from $E_1$. We observe that we still have an unresolved $K_{2,3}$, namely, the one where the part of cardinality $2$ consists of the vertices $s_{4,5}$ and $y$ and the part of cardinality $3$ consists of the the vertices $s_{4,3}$, $s_{4,7}$ and $x$ (see \cref{fig:2tree:dense-component:5}). The only edge of this $K_{2,3}$ that is not subdivided is $xy$, and the only edge of this $K_{2,3}$ that has only one edge-disjoint subdivision is $xs_{4,5}$. Hence, if we want to exclude only at most $8$ edges in total from $E_1$, we must have either (a)~$xy\notin E_1$ or (b)~$xs_{4,5}\notin E_1$. In Subcase~(a), we have successfully shown that $xy\notin E_1$, whereas in Subcase~(b), we again arrive at the same outerplanar subgraph as in the previous part of the proof (based on deciding that both both $s_{2,5}y \notin E_1$ and $xs_{4,5} \notin E_1$ in Property~\ref{prop:dense:5}). This concludes the proof.
\end{proof}

        \begin{figure}[t]
\includegraphics[width=\textwidth,page=14]{figures/2tree.pdf}

    \caption{Illustration for the proof of \cref{lem:sparse-component}.}
    \label{fig:2tree:sparse-component}
\end{figure}

We next leverage \cref{lem:dense-component} to a stronger result for the entire graph $G$.

\begin{lemma}\label{lem:sparse-component}
    We have~$|E_1^*|\leq 172$. Moreover, if~$|E_1^*| = 172$, then, up to renaming of isomorphic components:
    \begin{enumerate}[label=\textsf{\textbf{\textcolor{lipicsGray}{\upshape{{\arabic*}.}}}},ref=\textsf{\textbf{\textcolor{lipicsGray}{\upshape{{\arabic*.}}}}}]
        \item\label{prop:sparse:1} $G^{uv}$ and $G^{vw}$ each contain $57$ edges of $E_1^*$ according to the constraints of \cref{lem:dense-component}, i.e., in particular, $uv\notin E_1^*$ and $vw\notin E_1^*$
        \item\label{prop:sparse:2} $uw, s_{1,5}^{uw}w, s_{2,5}^{uw}w, s_{3,5}^{uw}u, s_{4,5}^{uw}u \notin E_1^{*}$ (5 missing edges)
        \item\label{prop:sparse:3} for $i\in \{1,2\}$ either $s_{i,5}^{uw}s_{i,9}^{uw} \notin E_1^{*}$ or $s_{i,9}^{uw}w \notin E_1^{*}$ (total of 2 missing edges) 
        \item\label{prop:sparse:4}  for $i\in \{3,4\}$ either $s_{i,5}^{uw}s_{i,9}^{uw} \notin E_1^{*}$ or $s_{i,9}^{uw}u \notin E_1^{*}$ (total of 2 missing edges).
    \end{enumerate}
    Finally, if $|E_1^*| \in \{170,171\}$, then:
    \begin{enumerate}[label=\textsf{\textbf{\textcolor{lipicsGray}{\upshape{\Alph*.}}}},ref=\textsf{\textbf{\textcolor{lipicsGray}{\upshape{\Alph*.}}}}]
 \item \label{prop:sparse:A} $G^{uv}$ contains $58$ edges of $E_1^*$ according to the constraints of \cref{lem:dense-component},  $G^{vw}$ contains at least $56$  edges of $E_1^*$ according to the constraints of \cref{lem:dense-component} and $G^{wu}$ contains at most $56$ edges according to the Constraints \ref{prop:sparse:2} to~\ref{prop:sparse:4} of this lemma, or
 \item \label{prop:sparse:B} $G^{uv}$ and $G^{vw}$  each contain $57$ edges of $E_1^*$ according to the constraints of \cref{lem:dense-component} and $G^{wu}$ contains $56$ or $57$ edges of $E_1^*$ according to the constraints of \cref{lem:dense-component}.
    \end{enumerate}
\end{lemma}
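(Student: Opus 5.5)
Since $G$ is the edge-disjoint union of $G^{uv}$, $G^{vw}$ and $G^{uw}$ (each sunflower copy and each edge of $\Delta$ belongs to exactly one of them), we write $E_1^*=E_1^{uv}\cup E_1^{vw}\cup E_1^{uw}$ with $E_1^{xy}=E_1^*\cap E^{xy}$. Each $E_1^{xy}$ induces an outerplanar subgraph, so \cref{lem:dense-component} gives $|E_1^{xy}|\le 58$, hence $|E_1^*|\le 174$. The gain comes from forbidden subdivisions that span all three parts. The central object is a $K_4$-subdivision (or $K_{2,3}$-subdivision) with branch vertices among $u,v,w$ and one or two copies of $s_5$. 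For example, $u,v,w$ together with $s_{i,5}^{uv}$ form a $K_4$-subdivision when $s_{i,5}^{uv}$ is linked to $u$ and $v$ via paths inside $S_i^{uv}$ and to $w$ via a path through another component, e.g.\ $v\!-\!s_{j,5}^{vw}\!-\!w$. Alternatively, the ``outer'' paths $x\!-\!s_{i,5}^{xy}\!-\!y$ that survive in $E_1^{xy}$ act as replacements for the triangle edges. A surviving $x$--$y$ path in each of the three parts, together with one extra internally disjoint $x$--$y$ connection in one part, already gives a $K_4$- or $K_{2,3}$-subdivision.

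First, I use the structural constraints of \cref{lem:dense-component} to record, for $|E_1^{xy}|\ge 56$, which $x$--$y$ connections remain in $E_1^{xy}$. By Constraints~\ref{prop:dense:1}--\ref{prop:dense:5}, $xy$ may be present only when $|E_1^{xy}|=57$ or less in the specific way of Constraint~\ref{prop:dense:5}. Otherwise there survive one or two internally disjoint $x$--$y$ paths through the $s_{i,5}$ whose two incident edges $xs_{i,5}$ and $s_{i,5}y$, or their subdivisions, are not both removed. Second, I show that if two parts have $58$ edges each, then the cycle through $u,v,w$ formed by these surviving paths, plus the extra connection in a $58$-part, forms a forbidden subdivision. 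Resolving it inside the third part costs at least three or four extra edges, because every candidate edge again comes with a parallel subdivision (the same ``one edge plus one subdivision edge'' doubling used in \cref{lem:dense-component}). This forces the third part to $\le 56$ edges and gives totals $\le 172$.

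Third, I enumerate the distribution $(a,b,c)$ of $|E_1^{uv}|,|E_1^{vw}|,|E_1^{uw}|$ with sum $\ge 170$. For sum $172$, combinations like $(58,58,56)$ and $(58,57,57)$ should be excluded by the cross-component argument, leaving $(57,57,58)$-type configurations. In these, the ``$58$-part'' cannot actually realize its maximum pattern and must instead drop $uw$ together with all four spoke edges $s_{i,5}w$ / $s_{i,5}u$ plus their subdivision partners. That yields exactly $65-13+\ldots$, which is the form in Constraints~\ref{prop:sparse:2}--\ref{prop:sparse:4}, with $G^{uw}$ realizing a pattern of $58$ edges and the others $57$. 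For $170,171$ I run the same case list and obtain alternatives~\ref{prop:sparse:A} and~\ref{prop:sparse:B}, using symmetry (renaming $u,v,w$ and the isomorphic copies $S_1\leftrightarrow S_2$, $S_3\leftrightarrow S_4$) to reduce cases.

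The main obstacle is the second step: pinning down exactly which $K_{2,3}$/$K_4$ subdivisions cross between parts for each admissible pattern of \cref{lem:dense-component}, and proving the cheapest resolution costs the claimed number of edges. I would first handle it by fixing, per part, the list of surviving $x$--$y$ paths (at most two internally disjoint ones plus possibly $xy$), treating each part as a ``thick edge'' of multiplicity $1$ or $2$ between its endpoints. Outerplanarity of the triangle $u,v,w$ with such thick edges then reduces to a small multigraph check: a triangle with one doubled side is a $K_{2,3}$-subdivision. After that, each removal's cost is read off from \cref{lem:dense-component}.
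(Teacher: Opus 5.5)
\textbf{The gap is the ``thick edge'' reduction: it never detects the real obstruction.} First, $G$ is a $2$-tree, hence $K_4$-minor-free, so no $K_4$-subdivision can occur. Your example fails because every path from $s_{i,5}^{uv}$ to $w$ leaves $S_i^{uv}$ through $u$ or $v$, which are already branch vertices.

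Second, measuring each part by the number of internally disjoint $x$--$y$ paths tells you nothing in the relevant range. By \cref{lem:dense-component}, whenever $|E_1^{xy}|\ge 56$, the copies $S_1$, $S_3$ and one of $S_2,S_4$ are cut between $x$ and $y$: their edge $s_5s_{10}$ and a triangle edge at $s_9$ are missing. If $xy$ survives, then all four copies are cut. So $x$ and $y$ are joined either only by the edge $xy$ or only through a single copy, inside which every $x$--$y$ path passes through $s_5$. The multiplicity is therefore at most $1$ in every case. Your multigraph check sees a plain cycle on $u,v,w$ and would accept a $(58,58,58)$ split, i.e.\ $174$ edges.

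The missing idea is an obstruction with a branch vertex inside a part. Suppose $S_i^{xy}$ is intact and the other two parts contain paths between their endpoints. Then $s_{i,5}$ and $s_{i,1}$ are joined by three internally disjoint paths of length at least two: through $s_{i,3}$, through $s_{i,7}$, and through $s_{i,10}$ and then around the triangle via the other two parts. This is a $K_{2,3}$-subdivision. Every $58$-pattern contains an intact copy, so if all three parts contain an $x$--$y$ path, no part has $58$ edges and $|E_1^*|\le 171$. Your step~2 intuition that two $58$-parts force the third to be cut has the right shape, but the ``extra connection'' it relies on does not exist.

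\textbf{The complementary case and your counts.} For a part with no $x$--$y$ path, ``three or four extra edges'' is not an argument. What is needed is that $G^{xy}$ has nine edge-disjoint $x$--$y$ paths: $xy$, the four paths $x\,s_{i,5}\,y$, and the four through $s_{i,3},s_{i,5},s_{i,9}$. So such a part loses at least nine edges. Since $s_1$ and $s_5$ are joined by five edge-disjoint paths inside a sunflower, the only $9$-edge cuts are exactly those of Constraints~2--4. This yields $58+58+56=172$, and equality forces the other two parts into $58$-patterns, which is what gives $uv,vw\notin E_1^*$.

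Your step~3 contradicts this. You discard $(58,58,56)$, which is exactly the extremal configuration. You also assert that Constraints~2--4 leave $58$ edges in $G^{uw}$, whereas they delete nine and leave $56$. The ``$57$'' in item~1 of the statement is a slip: with it the total would be $170$, and \cref{lem:dense-component} forces $xy\notin E_1^{xy}$ only at $58$. Do not bend the counts to match it.

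Once the dichotomy ``some part has no $x$--$y$ path'' versus ``all parts have one'' is in place, the cases $170$ and $171$ follow directly. A $58$-part forces some other part to have no $x$--$y$ path, giving alternative~A. Otherwise all parts have $56$ or $57$ edges, giving alternative~B.
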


\begin{proof}
    First, we consider the case where the induced subgraph $G^{xy}[E_1^*]$ for one pair $x,y\in \{u,v,w\}$ contains no path between $x$ and $y$ and derive an upper bound for $|E_1^*|$ in this scenario. Observe that each $G^{xy}$ with 
    $x,y\in\{u,v,w\}$ and $x\neq y$ contains nine edge-disjoint paths between $x$ and $y$:
    \begin{enumerate*}
        \item $(x,y)$,
        \item[\textsf{\textbf{\textcolor{lipicsGray}{\upshape{2.-5.}}}}] $\forall i \in\{1,2,3,4\}: (x,s_{i,5}^{xy},y)$,
        \item[\textsf{\textbf{\textcolor{lipicsGray}{\upshape{6.-7.}}}}] $\forall i \in\{1,2\}: (x,s_{i,3}^{xy},\allowbreak s_{i,5}^{xy},s_{i,9}^{xy},y)$, and
        \item[\textsf{\textbf{\textcolor{lipicsGray}{\upshape{8.-9.}}}}] $\forall i \in\{3,4\}: (y,s_{i,3}^{xy},s_{i,5}^{xy},s_{i,9}^{xy},x)$.
    \end{enumerate*}
    Hence, if the induced subgraph $G^{xy}[E_1^*]$ for one pair $x,y\in \{u,v,w\}$ contains no path between $x$ and $y$, then $|E_1^*|\leq 172$ follows immediately by applying \cref{lem:dense-component} to the two subgraphs with no induced $xy$-path, say  $G^{uv}$ and $G^{vw}$, and removing one edge from each edge-disjoint $xy$-path in the remaining component, say $G^{uw}$.
    Moreover, the only way to decide for $9$ edges to not belong to $E_1^*$, while still ensuring that $G^{xy}[E_1^*]$ contains no $xy$-path,
    is by adhering to Constraints~\textsf{\textbf{\textcolor{lipicsGray}{\upshape{2.-4.}}}} of the statement of the lemma as the edges incident to $s_{i,3}$ can be replaced with symmetric edges incident to $s_{i,7}$;
    i.e., not assigning them to $E_1^*$ does not decrease the number of induced $xy$-paths in $G^{xy}[E_1^*]$.

It remains to consider the scenario where the induced subgraph $G^{xy}[E_1^*]$ for each pair $x,y\in \{u,v,w\}$ contains a path between $x$ and $y$. We first argue that in this scenario, indeed we even have that $|E_1^*|<172$. Hence, assume for a contradiction that $|E_1^*|\geq 172$ 
and that all of $G^{xy}$ with $x,y\in\{u,v,w\}$ and $x \neq y$ contain an induced $xy$-path. Since $|E_1^*|\geq 172$, by \cref{lem:dense-component}, there must be at least one $G^{xy}$, say w.l.o.g., $G^{uv}$, such that $G^{uv}[E_1^*]$ contains the edges (see \cref{fig:2tree:sparse-component}): 
    $us_{*,5}^{uv}$,
     $s_{*,3}^{uv}s_{*,5}^{uv}$,
    $s_{*,7}^{uv}s_{*,5}^{uv}$,
     $vs_{*,3}^{uv}$,
     $vs_{*,7}^{uv}$ with $*\in\{1,2,3,4\}$.
But then, there is an induced subgraph with a $K_{2,3}$ in $G[E_1^*]$; namely:
\begin{itemize}
    \item $s_{*,5}^{uv}$ and $v$ form one part of the vertex set
    \item $s_{*,3}^{uv}$, $s_{*,7}^{uv}$ and the contraction of $p_{vw}p_{wu}$ form the other part of the vertex set where $p_{vw}$ and $p_{wu}$ are the $vw$-path in $G^{vw}[E_1^*]$ and the $wu$-path in $G^{wu}[E_1^*]$; respectively.
\end{itemize}
This contradicts the outerplanarity of $G[E_1^*]$.

\noindent Therefore, in the following, we have $|E_1^*|\in\{170,171\}$. We consider two subcases: (\ref{prop:sparse:A})~one subgraph, w.l.o.g., $G^{uv}$ contains $58$ edges of $E_1^*$ and (\ref{prop:sparse:B})~no subgraph $G^{xy}$ contains $58$ edges of $E_1^*$. 
In Subcase (\ref{prop:sparse:A}), we observe that if one of the subgraphs, w.l.o.g., $G^{vw}[E_1^*]$, contains an induced $vw$-path, then by the previous line of argument for the case $|E_1^*|=172$, subgraph $G^{uw}$ must adhere to Constraints~\ref{prop:sparse:2}-\ref{prop:sparse:4}. Note that in this particular  case, if $G^{vw}$ contains at least $57$ edges of $E_1^*$. This provides the Constraint~\ref{prop:sparse:A}.

In Subcase (\ref{prop:sparse:B}), we observe that each subgraph contains at least $56$ edges of $E_1^*$ and thus Constraint~\ref{prop:sparse:B} follows immediately from \cref{lem:dense-component}.
\end{proof}

\subparagraph*{Weighted Book Thickness of $G$.} We obtain a $3$-page book embedding $L^*$ of $G$ with weighted book thickness $\frac{220}{195}$ using the planar embedding and compatible thick yellow subhamiltonian cycle $C$ shown in \cref{fig:optimal-2tree-embedding} where the dark gray edges within the yellow shaded region are assigned to Page~$1$, the blue edges are assigned to Page~$2$ and the two red edges are assigned to Page~$3$. Observe that transforming $C$ to the spine results in the valid $3$-page book embedding $L^*$ as the two edges on Page~$3$ do not intersect and that we have exactly the edges propagated by \cref{lem:sparse-component} not on Page~$1$. In the following, we use the computer-aided part of the proof to 
\begin{enumerate}[label=\textsf{\textbf{\textcolor{lipicsGray}{\upshape{{\arabic*.}}}}},ref=\textsf{\textbf{\textcolor{lipicsGray}{\upshape{{\arabic*}}}}}]
    \item\label{computerproof:1} show that for any $2$-page book embedding of $G$, at most $169$ edges are assigned to Page~$1$
    \item\label{computerproof:2} establish the optimality of $L^*$ with respect to the weighted book thickness.
\end{enumerate}

\begin{figure}[t]
    \centering
    \includegraphics[width=\linewidth,page=2]{figures/2tree.pdf}
    \caption{Embedding of $G$ with optimal weighted book thickness.}
    \label{fig:optimal-2tree-embedding}
\end{figure}

\subparagraph*{The computer-aided part of the proof.} In the previous theoretical analysis, we focused on structural properties of graph $G$. In order to establish \cref{thm:2tree}, it is additionally necessary to evaluate the topology of linear layouts of $G$ which have at least $|E_1^*| \geq 170$ edges on Page~$1$. To avoid a theoretical analysis of an excessive amount of subcases, we complement our  structural analysis of forbidden subdivisions with a SAT solving approach. To this end, we  encode the constraints described in~\cref{lem:dense-component,lem:sparse-component} within the SAT framework by Bekos, Kaufmann, and Zielke~\cite{DBLP:conf/gd/Bekos0Z15} and check for embeddability with $2$ and $3$ pages, respectively. 
Their SAT formulation uses the following boolean variables~\cite{DBLP:journals/corr/abs-2003-09642,DBLP:conf/gd/Bekos0Z15}:
\begin{itemize}
    \item $\sigma(u,v) = \texttt{true}$ if and only if for a pair of vertices $u$ and $v$, $u$ precedes $v$ on the spine,
    \item $\phi_p(e)$ = $\texttt{true}$ if and only if edge $e$ is assigned to page $p$, and,
    \item $\chi(e,e')$ = $\texttt{true}$ if and only if edges $e$ and $e'$ are assigned to the same page. 
\end{itemize}
At a very high level, constraints in the SAT-formulation make use of (i)~the variables~$\sigma$ to define a linear order of the vertices, (ii)~variables $\phi$ to ensure that each edge is assigned to exactly one page and (iii)~variables $\chi$, whose value is computed based on the values of variables $\phi$,  to ensure which edges must not intersect using the variables~$\sigma$. We ran the following experiments on a single-node 8-core 2.5 GHz Intel Core i5-12400 machine with 16GB RAM. The source codes of both of our  experiments is publicly available at \supplementlink.

\subparagraph*{Experiment \ref{computerproof:1}: 2-Page Embeddability with at least 170 Edges on Page 1.} In this experiment, we performed three distinct subexperiments. First, we encoded only Constraints \ref{prop:sparse:1}-\ref{prop:sparse:4} of \cref{lem:sparse-component} and tested for $2$-page embeddability. Recall that these constraints capture the local restriction for each component $G^{xy}$ when Page~$1$ induces a maximal outerplanar subgraph of $G$. For all possible combinations of these configurations, the resulting SAT instances were unsatisfiable. 
Afterwards, we extended the previous experiment by encoding Constraints \ref{prop:sparse:A}  of \cref{lem:sparse-component} in the second subexperiment and \ref{prop:sparse:B} of \cref{lem:sparse-component} in the third subexperiment. These constraints capture the possibilities that arise when $170$ or $171$ edges are assigned to Page~$1$. Again, we enumerate all possible combinations of these constraints and encode them as SAT instances and solve them. All resulting instances were again unsatisfiable, which verifies the following:
\begin{proposition}\label{prop:2page-embeddability}
    There is no $2$-page book embedding of $G$ with at least $170$ edges on Page~$1$.
\end{proposition}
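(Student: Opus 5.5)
The plan is to reduce the claim to a finite set of SAT instances, using \cref{lem:sparse-component} to pin down which edges may be missing from Page~$1$. Suppose, for contradiction, that $(p,C)$ is a $2$-page book embedding of $G$ with $|E_1|\ge 170$. Then $E_1^*:=p^{-1}(1)$ induces an outerplanar subgraph. By \cref{lem:sparse-component}, $|E_1^*|\le 172$. Moreover, up to renaming the isomorphic components $G^{uv},G^{vw},G^{uw}$ (and the symmetric copies inside each), $E_1^*$ falls into one of three cases:
\begin{enumerate*}[label=(\roman*)]
\item $|E_1^*|=172$ and Constraints~\ref{prop:sparse:1}--\ref{prop:sparse:4} hold;
\item $|E_1^*|\in\{170,171\}$ and Constraint~\ref{prop:sparse:A} holds;
\item $|E_1^*|\in\{170,171\}$ and Constraint~\ref{prop:sparse:B} holds.
\end{enumerate*}
In every case, each component's missing edges are described by \cref{lem:dense-component} or by Constraints~\ref{prop:sparse:2}--\ref{prop:sparse:4}. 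These descriptions leave only finitely many binary choices: which edge of each subdivided path is dropped, which branch of Constraints~\ref{prop:dense:4}/\ref{prop:dense:5} applies, and the optional \ref{prop:dense:6a}/\ref{prop:dense:6b} edges. So I would enumerate all resulting configurations explicitly.

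For each configuration, I would fix the set $F$ of edges forced off Page~$1$. I would then build the Bekos--Kaufmann--Zielke formula with $k=2$ pages on the vertex set of $G$. It uses the ordering variables $\sigma$, the page variables $\phi$ and the same-page variables $\chi$, together with the standard transitivity and crossing clauses. On top of this I add unit clauses $\phi_1(e)$ for every edge $e\notin F$ that the configuration requires on Page~$1$, and $\phi_2(e)$ for $e\in F$. Where the configuration only bounds the count (the ``at least 56'' or ``at most 56'' parts of \ref{prop:sparse:A}/\ref{prop:sparse:B}), I would either enumerate the finitely many concrete edge sets or add a cardinality constraint requiring at least $170$ edges on Page~$1$. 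The proposition then follows once every instance is reported UNSAT. Any $2$-page embedding with $\ge 170$ Page-$1$ edges would, after renaming, satisfy one of the instances.

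The main obstacle I expect is completeness of the enumeration. I must make sure that ``up to renaming of isomorphic components'' in \cref{lem:sparse-component} and the ``w.l.o.g.'' choices in \cref{lem:dense-component} are reflected correctly. Either I include all symmetric images explicitly, or I argue that the graph automorphisms (permuting $u,v,w$, swapping $S_1\leftrightarrow S_2$ and $S_3\leftrightarrow S_4$, swapping $s_3\leftrightarrow s_7$ branches) map embeddings to embeddings, so checking one representative suffices. I would also double-check that the cases $|E_1^*|=170,171$ do not allow configurations outside \ref{prop:sparse:A}/\ref{prop:sparse:B}. To stay safe, I would encode these cases only through the lemma's explicit constraints plus the global cardinality bound, rather than fixing every edge. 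The remaining effort is routine: generating the CNF files and running the solver. Computational cost should be modest, since the fixed page assignments leave the solver mainly searching over vertex orders.
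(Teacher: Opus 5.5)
Your plan is the paper's own proof. Use \cref{lem:sparse-component} to restrict a hypothetical $2$-page embedding with at least $170$ Page-$1$ edges to the configurations of Constraints 1--4 (the $172$-edge case) or of A/B (the $170$/$171$-edge cases), encode every combination in the two-page Bekos--Kaufmann--Zielke formulation, and conclude from all instances being unsatisfiable; your extra care with automorphisms and with the $s_9$-path choices is a sound refinement. Fixing only the lemma's forced-off edges to Page~$2$ and adding a global bound of at least $170$ Page-$1$ edges, rather than pinning every edge, is also the safer encoding. A/B leave some edges undetermined, and Constraint 1's count must be $58$, not the stated $57$, for the total to reach $58+58+56=172$.
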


\subparagraph*{Experiment \ref{computerproof:2}: 3-page Embeddability with 1 Edge on Page 1.} Finally, we verify the optimality of the $3$-page book embedding $L^*$ of graph $G$ implied by the planar embedding and subhamiltonian cycle in \cref{fig:optimal-2tree-embedding}, which assigns the two edges $vw$ and $us_{4,5}$ to Page~$3$. In $L^*$, 172 edges have been assign to Page~$1$, 21 edges to Page~$2$ and 2 edges to Page~$3$. Thus, the corresponding weighted book thickness is $\frac{220}{195}$. According to \cref{lem:sparse-component}, the only way to improve upon $L^*$ would be to assign exactly 1 edge to Page~$3$, while again 172 edges must be assigned to Page~$1$ adhering to the constraints of \cref{lem:sparse-component}. We encoded the corresponding constraints in a SAT formula and checked for satisfiability of all possible variations of the constraints. Since again none of the generated SAT instances were satisfiable, we conclude:

\begin{proposition}\label{prop:3page-embeddability}
    There is no $3$-page book embedding of $G$ with $172$ edges on Page~$1$, $22$ edges on Page~$2$ and $1$ edge on Page~$3$. 
\end{proposition}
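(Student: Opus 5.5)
The statement to prove is \cref{prop:3page-embeddability}: there is no $3$-page book embedding of $G$ with $172$ edges on Page~$1$, $22$ on Page~$2$ and $1$ on Page~$3$. I would reduce it to a finite collection of SAT instances, as in Experiment~\ref{computerproof:2}.

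\textbf{Step 1: fix the Page-1 structure.} Suppose such an embedding $(p,C)$ exists. Then $E_1=p^{-1}(1)$ induces an outerplanar subgraph with $|E_1|=172$. By \cref{lem:sparse-component}, $E_1$ is determined up to renaming of the three components by Constraints~\ref{prop:sparse:1}--\ref{prop:sparse:4} together with the constraints of \cref{lem:dense-component} inside $G^{uv}$ and $G^{vw}$. In each of these two components the lemma leaves the following freedoms:
\begin{itemize}
\item which of the two $57$-edge patterns is used, namely both $s_{2,5}y,\,xs_{4,5}$ missing, or $xy$ missing together with one of them;
\item which of $s_{i,5}s_{i,9}$ or $s_{i,9}x/y$ is dropped.
\end{itemize}
In $G^{uw}$ the only freedom is the binary choice in \ref{prop:sparse:3} and~\ref{prop:sparse:4}. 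Enumerating these gives a finite, fairly small list of candidate sets $E_1$; the renaming symmetry (choice of the special component, and $x\leftrightarrow y$ within components) cuts it down further. Each candidate has exactly $23$ edges outside $E_1$, and exactly one of them must go to Page~$3$.

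\textbf{Step 2: encode each candidate as a SAT instance.} For each candidate $E_1$ and each choice of the single edge $f\in E\setminus E_1$ placed on Page~$3$, I would encode the problem in the Bekos--Kaufmann--Zielke framework:
\begin{itemize}
\item the edges of $E_1$ are forced onto Page~$1$ via $\phi_1$;
\item the edges of $E\setminus(E_1\cup\{f\})$ are forced onto Page~$2$;
\item $f$ is forced onto Page~$3$;
\item the $\sigma$ variables range over all spine orders, with the usual non-crossing clauses.
\end{itemize}
Since $f$ is alone on Page~$3$, its page imposes no constraint. So each instance just asks whether $G-f$ has a $2$-page embedding with page partition $(E_1,\,E\setminus(E_1\cup\{f\}))$. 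This is a purely finite check of at most $23$ times the number of candidates, and I would confirm that every instance is unsatisfiable.

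\textbf{Step 3 (pruning, optional).} Before running the solver I would discard easy cases by hand. Whenever Page~$2$ together with the spine cycle $C$ must contain a subdivision of $K_4$ or $K_{2,3}$ forced by the mandatory missing edges, that Page-2 set cannot be outerplane with outer cycle $C$. For example, the missing $K_{2,4}$ edges $s_{i,5}y$ in two components can force such a configuration.

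\textbf{Main obstacle.} The hard part is making Step~1 rigorous and exhaustive. \cref{lem:sparse-component} is stated only ``up to renaming of isomorphic components,'' so I must verify that every orbit representative is actually enumerated, including the mirror choices that \cref{lem:dense-component} fixed ``w.l.o.g.'' Once the enumeration is trustworthy, the remaining SAT checks are mechanical.
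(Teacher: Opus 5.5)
Your proposal is correct and follows the paper's argument: \cref{lem:sparse-component} restricts Page~1 to finitely many configurations, and each one is refuted by an unsatisfiable Bekos--Kaufmann--Zielke SAT instance; enumerating the single Page-3 edge $f$ explicitly (it imposes no crossing constraint) is simply a concrete version of the paper's encoding. Two small cautions for Step~1: with $|E_1|=172$ and $56$ edges in $G^{uw}$, each of $G^{uv},G^{vw}$ must contribute exactly $58$ edges, so the lemma's ``$57$'' should be read as $58$, and only the pattern with $xy$ missing plus exactly one of $s_{2,5}y$, $xs_{4,5}$ (and its 6a/6b partner) occurs; also, mirroring $x\leftrightarrow y$ inside one component is not by itself an automorphism of $G$, so keep both choices in Constraint~\ref{prop:dense:4} instead of pruning with that symmetry.
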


\subparagraph{Conclusion.} \cref{prop:2page-embeddability} implies that $\mathrm{wbt}_2(G) \geq \frac{221}{195}$ as every $2$-page book embedding has at most $169$ edges on Page~$1$ and at least $26$ edges on Page~$2$. In contrast, \cref{prop:3page-embeddability} and our construction in \cref{fig:optimal-2tree-embedding} imply that $\mathrm{wbt}_3(G) = \frac{220}{195}$. This proves \cref{thm:2tree}.

\section{Graphs of Pathwidth 2}
\label{sec:pathwidth2}
Given that already for planar graphs of pathwidth~3 (\cref{thm:3-path-distinct}) and for graphs of treewidth~2 (\cref{thm:2tree}),
two pages may not be sufficient to allow for a minimum-weight book embedding,
we restrict our class of graphs even further
to graphs of pathwidth 2 where two pages are always sufficient.

\begin{theorem}
    \label{thm:pathwidth2}
    Let $G$ be a graph of pathwidth at most~2.
    Then, $\wbt(G) = \wbt_2(G)$.
    Moreover, a book embedding of weight $\wbt(G)$
    can be computed in linear time.
\end{theorem}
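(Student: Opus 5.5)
We prove \cref{thm:pathwidth2} by a direct construction: for a graph $G$ of pathwidth at most~2 we build a $2$-page book embedding whose Page~$1$ carries a \emph{maximum-cardinality} outerplanar subgraph of $G$. Such an embedding is optimal among all book embeddings, with any number of pages. To see this, let $f$ be the maximum number of edges of an outerplanar subgraph of $G$. Any book embedding $(p,C)$ has at most $f$ edges on Page~$1$, and every other edge costs at least~$2$. Hence $w(p,C)\ge f+2(|E|-f)$, and an embedding attaining this bound on two pages realizes $\wbt(G)$.

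\textbf{Step 1: reduction.} We may assume $G$ is connected; components are handled separately and their spine orders concatenated, since both the weight and the bound are additive. Likewise, blocks can be treated separately and glued at cut vertices without creating crossings. Next we take a nice path decomposition of width~2 and add edges so that every bag of size~$3$ induces a triangle; call the result $H$. The problem is that such edges could hurt optimality. So we do not complete $G$ itself. Instead we use the path decomposition only as a sweep order and build the spine incrementally, inserting vertices in the order they are introduced.

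\textbf{Step 2: structure.} A biconnected pathwidth-$2$ graph is a subgraph of a ``$2$-caterpillar''. Its bags $B_1,\dots,B_t$ form a sequence of triangles in which consecutive bags share an edge $\{a_i,b_i\}$; these shared edges form the \emph{spine path} of the decomposition. The obstructions to outerplanarity are exactly subdivided $K_4$ and $K_{2,3}$ (\cref{thm:outerplanar}). In a width-$2$ path decomposition these arise only locally: a $K_{2,3}$ subdivision appears when three or more bags ``hang'' on the same separator pair $\{a,b\}$, and a $K_4$ subdivision appears when a separator edge is reused after its two endpoints have each gained fresh neighbours.

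For every maximal run of bags sharing a separator pair $\{a,b\}$ with $r\ge 3$ internally disjoint $ab$-paths, we put exactly the minimum possible number of edges on Page~$2$. This is one edge per excess path, chosen on the outermost paths. This is a greedy rule in the spirit of the proof of \cref{lem:dense-component}.

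We then argue by an exchange argument that these local choices are simultaneously optimal. The key fact is that separator pairs along a path decomposition are nested linearly, not branching as in a tree. Consequently the conflict illustrated in \cref{lem:sparse-component}, where a triangle $u,v,w$ with three heavy sides forces an additional global loss, cannot occur. The sum of the local lower bounds therefore equals $|E|-f$.

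\textbf{Step 3: realization.} We must show that the removed edges fit on a single second page together with a spine order for which the Page-$1$ subgraph is outerplane. We sweep the decomposition left to right. At each step we maintain the invariant that the current separator pair $\{a,b\}$ is consecutive or ``visible'' on the current spine. Each new vertex is inserted next to $a$ or $b$, and each excess $ab$-path is nested on Page~$2$ around the previous ones. Nestedness follows because all Page-$2$ edges at a given separator share endpoint $a$ or $b$, and separators move monotonically. Each bag is processed in constant time, so the whole construction runs in linear time, given that path decompositions of width~$2$ are computable in linear time.

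\textbf{Main obstacle.} The hard step is the global optimality claim in Step~2: that greedy local deletions achieve the global maximum outerplanar subgraph while remaining compatible with a single Page~$2$. I would first try induction on the number of bags. The inductive step would compare an optimal outerplanar subgraph restricted to the prefix with ours, via a short case analysis on how the last bag attaches to the current separator.
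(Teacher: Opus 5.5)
\textbf{There is a genuine gap, in Step~2.} Your opening reduction is correct and is exactly what the paper uses: a $2$-page embedding whose Page~$1$ carries a maximum outerplanar subgraph is optimal over all page numbers. The reduction to blocks is also correct. The problem is that you close Step~2 with a claim that is false. The local lower bounds at different separator pairs do \emph{not} sum to $|E|-f$, because one deleted edge can destroy the $K_{2,3}$'s at two neighbouring separators at once.

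Here is a concrete counterexample. Take $P=p_1p_2p_3p_4$ and $Q=q_1q_2q_3q_4$, with chords $p_1q_1$, $p_4q_4$ and subdivided chords $p_2m_2q_2$, $p_3m_3q_3$. This graph has pathwidth~$2$ and $12$ edges.
\begin{itemize}
\item Both $\{p_2,q_2\}$ and $\{p_3,q_3\}$ are joined by three internally disjoint paths, so your local bounds add up to $2$.
\item Deleting the single edge $p_2p_3$ leaves two $5$-cycles joined by the bridge $q_2q_3$, which is outerplanar. Hence $|E|-f=1$.
\end{itemize}
Your rule ``one edge per excess path, chosen on the outermost paths'' attains $1$ here only if both separators happen to pick the same edge of the face between them. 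Nothing in your proposal forces that. Separately, no subdivided $K_4$ can ever occur: $K_4$ has pathwidth~$3$ and pathwidth is minor-closed. This is why the paper only has to eliminate $K_{2,3}$-minors.

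\textbf{What the optimality proof actually needs.} Its real content is a coordination problem along the faces $f_1,\dots,f_t$ of the two-path drawing given by the Bar\'at--Hajnal--Lin--Yang characterization. In the paper's marking:
\begin{itemize}
\item chords are never deleted;
\item every subdivided chord of a group beyond two loses one edge;
\item a group with at least two subdivided chords gets both adjacent faces ``opened'', i.e., a $P$- or $Q$-edge of each is deleted;
\item a group with exactly one subdivided chord gets the face to its right opened, unless the face to its left is already open.
\end{itemize}
This left-to-right greedy is what makes one deleted edge serve two neighbouring groups. Its minimality is proved by an exchange argument that turns any optimal deletion set into the greedy one, edge by edge from left to right (\cref{clm:min-marking}).

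\textbf{Consequence for Step~3.} Because you never pin down which edges go to Page~$2$, your nesting argument cannot be checked. The paper instead builds an explicit planar subhamiltonian cycle with all marked edges outside and all other edges inside. That construction relies on the specific shape of the marking, e.g., ordering parallel subdivided chords so that those containing a marked edge come after the unmarked ones.
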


\begin{proof}
Recall that the set of edges on any page corresponds to an outerplanar graph.
Hence, to show the theorem,
it suffices to show that there is a maximum-size outerplanar subgraph of~$G$,
which we will put on the first page with an order~$\pi$ on the spine,
such that the remaining edges can be put on the second page without crossings
given the order~$\pi$ on the spine.

W.l.o.g., assume that $G$ is biconnected;
otherwise, decompose $G$ into its biconnected components in linear time~\cite{DBLP:journals/cacm/HopcroftT73},
obtain book embeddings for each biconnected component independently
and put them together afterwards.
To this end, consider a cut vertex
being part of at least two biconnected components.
Reserve for the remaining vertices of each such biconnected component
a contiguous interval along the spine.
Observe that the edges of two distinct biconnected components
can nest but cannot cross.

Bar{\'{a}}t, Hajnal, Lin and Yang~\cite[Theorem~3.1]{structurePathWidth2}
characterize biconnected graphs with pathwidth at most~2
as graphs consisting of two vertex-disjoint paths $P$ and $Q$
that are connected by an arbitrary number of edges and paths of length~2
(we call them \emph{chords} and \emph{subdivided chords}, respectively)
such that:
\begin{itemize}
    \item if we draw~$P$ and~$Q$ on two horizontal lines,
    we can draw all (subdivided) chords with straight-line segments between those lines without crossings, and
    \item the leftmost vertices of~$P$ and~$Q$ are connected by a (subdivided) chord and
    the rightmost vertices of~$P$ and~$Q$ are connected by a (subdivided) chord.
\end{itemize}
We say that a path~$S$ of length~2 is parallel to an edge~$e$
(or another path~$S'$ of length~2)
if $S$ connects the two endpoints of~$e$ (or $S'$).
Note that the input graph~$G$ does not come with a labeling
of the edges as belonging to $P$ or~$Q$ or being a chord or subdivided chord.
However, we can easily obtain this as we show by the following claim.
For an illustration, see \cref{fig:graph-pw2}.

\begin{figure}[p]
    \centering
    \begin{subfigure}{\linewidth}
        \centering
        \includegraphics[page=1]{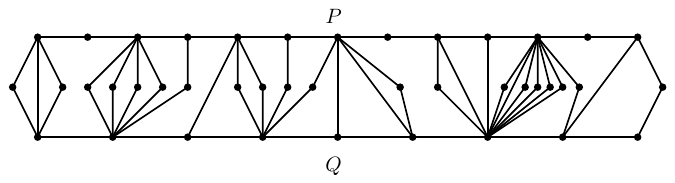}
        \subcaption{Structure of the input graph~$G$.}
        \label{fig:structure-pw2}
    \end{subfigure}
    
    \medskip
    
    \begin{subfigure}{\linewidth}
        \centering
        \includegraphics[page=2]{2-paths}
        \subcaption{Groups of prallel chords and subdivided chords.}
        \label{fig:groups-pw2}
    \end{subfigure}
    
    \medskip
    
    \begin{subfigure}{\linewidth}
        \centering
        \includegraphics[page=3]{2-paths}
        \subcaption{Faces after removing parallel chords and subdivided chords of the groups.}
        \label{fig:faces-pw2}
    \end{subfigure}
    
    \medskip
    
    \begin{subfigure}{\linewidth}
        \centering
        \includegraphics[page=4]{2-paths}
        \subcaption{Marked edges in~$M$ (blue and thick),
        removed to obtain a maximum-size outerplanar subgraph.}
        \label{fig:marking-pw2}
    \end{subfigure}
    
    \medskip
    
    \begin{subfigure}{\linewidth}
        \centering
        \includegraphics[page=5]{2-paths}
        \subcaption{Subhamiltonian path (light red) separting the edges of the two pages (black vs.\ blue edges).}
        \label{fig:subhamiltonian-pw2}
    \end{subfigure}
    
    \caption{Example of a biconnected graph~$G$ with pathwidth~2 and how we determine a 2-page book embedding
        realizing~$\wbt(G)$.}
    \label{fig:graph-pw2}
\end{figure}

\begin{claim}
    \label{clm:2path-structure}
    Let~$G$ be a biconnected graph of pathwidth at most~2.
    There is a linear-time algorithm that determines
    $P$, $Q$, the chords and subdivided chords.
    Moreover, the algorithm yields a grouping of parallel chords and subdivided chords,
    and a sequence of faces $f_1, \dots, f_t$ (for some $t \ge 0$)
    arising in a drawing of~$G$ where~$P$ and~$Q$ are arranged along two horizontal lines
    and, for each each group of parallel (subdivided) chord, there is one
    representative drawn with straight-line segments between~$P$ and~$Q$;
    see \cref{fig:groups-pw2,fig:faces-pw2}.
\end{claim}

\begin{claimproof}
    If $G$ is just an edge, a path of length~2, or a cycle, the theorem holds trivially.
    Excluding these cases, we can assume that the first and last vertices of $P$ and~$Q$
    are connected by at least one subdivided chord each;
    otherwise, if they were just connected by a chord,
    the first vertex of $P$ or~$Q$ would have degree~2.
    We could re-define $P$ and~$Q$ such that this degree-2 vertex is the middle vertex of a subdivided chord.
    
    Now observe that, if there is a path~$S$ of length~2,
    whose middle vertex has degree~2 in~$G$,
    such that $S$ is parallel to an edge~$e$,
    then $e$ is a chord and $S$ is a subdivided chord.
    To mark this configuration, iterate in linear time over each degree-2 vertex~$v$
    and check if the length-2 path whose middle vertex is~$v$ is parallel to an edge.
    If so, label $S$ as \textsf{subdivided chord}, label~$e$ as \textsf{chord},
    and remove $e$ for now.
    
    Similarly, if there is a path~$S$ of length~2,
    whose middle vertex has degree~2 in~$G$,
    such that $S$ is parallel to another such path~$S'$ of length~2,
    then both $S$ and~$S'$ are subdivided chords (since we have excluded the case of~$G$ being just a cycle).
    To mark this configuration, label each vertex $v$ with a unique number~$\ell(v)$ from 1 to $n$.
    Then, create an empty list $L$.
    Iterate in linear time over each degree-2 vertex~$v$ of~$G$ where $v$ has the two neighbors~$u$ and~$w$.
    Insert a new element into~$L$, which has the pair of numbers $(\ell(u), \ell(w))$ as its key,
    and has $v$ as its value.
    After filling $L$, sort $L$ in linear time by its keys using, e.g., RadixSort.
    Now iterate in linear time through $L$ and check each pair of consecutive elements.
    If the two keys are identical, we have found a pair of parallel subdivided chords.
    We use the values~$v$ and~$v'$ of these elements
    to identify the corresponding paths of length~2, which we both label \textsf{subdivided chord}
    and contract one into the other for now.
    
    After all these (temporary) removals of parallel chords and subdivided chords,
    consider what a drawing of $P$ and $Q$ along two horizontal lines would look like:
    it forms a sequence~$f_1, \dots, f_t$ of faces (for some integer $t$)
    that are separated by (non-parallel) chords or subdivided chords;
    see \cref{fig:faces-pw2}.
    
    We determine the faces and in this process label all edges.
    We use the colors black for edges whose incident faces have been determined
    and white for all other edges.
    Initially all edges are white.
    We can assume $t \ge 2$ as otherwise, $G$ would be a cycle or a path.
    Call a vertex of degree at least~3 in the reduced graph a \emph{high-degree vertex}.
    Note that the faces $f_2, \dots, f_{t-1}$ have three or four high-degree vertices,
    while the faces $f_1$ and~$f_t$ have exactly two high-degree vertices.
    Using this property, we determine $f_1$ and~$f_t$ in linear time as follows.
    Consider each maximal path~$X$ whose inner vertices have degree 2 in~$G$;
    if the endpoints of the path are connected by an edge~$e$,
    then $X$ and $e$ form $f_1$ or $f_t$
    (see, e.g., $f_{14}$ in \cref{fig:faces-pw2}).
    We label $e$ \textsf{chord} and assign the edges of~$X$ to $P$ and~$Q$.
    If two edges in~$X$ are labeled already \textsf{subdivided chord},
    the predecessors and successors belong to~$P$ and~$Q$, respectively;
    if not, label two consecutive edges \textsf{subdivided chord}
    and assign the other edges to $P$ and~$Q$ accordingly.
    We proceed similarly if the endpoints of $X$ are connected
    by a path~$S$ of length~2 such that middle vertex of~$S$ has degree~2
    (see, e.g., $f_1$ in \cref{fig:faces-pw2}).
    We label $S$ as \textsf{subdivided chord} and otherwise proceed in the same way.
    This can be done in linear time since, for these checks,
    we traverse each edge at most a constant number of times.
    Color all edges of~$f_1$ and~$f_t$ black and let $i = 2$.
    
    We have determined the chord or subdivided chord separating $f_{i-1}$ and~$f_i$
    and the two corresponding high-degree vertices~$u$ and~$w$.
    While there are white edges, we want to determine the edges bounding~$f_i$.
    At least one of $u$ and~$w$, say $u$, has exactly one incident white edge~$e$.
    Then, label $e$ with $P$ or~$Q$ (depending on the labeling around~$u$ in the previous faces).
    While the other endpoint of~$e$ has degree~2,
    let the next edge be~$e$ and assign it also to~$P$ or~$Q$.
    Eventually, we arrive at a high-degree vertex~$u'$.
    Proceed similarly at~$w$ if $w$ has exactly one incident white edge
    and let $w'$ be the vertex where this process stops (potentially $w = w'$).
    If $u'$ and $w'$ are connected by an edge~$e$, label $e$ as \textsf{chord}.
    Otherwise, $u'$ and $w'$ are connected by a path~$S$ of length~2
    whose middle vertex has degree~2.
    To check this, search starting from $u'$ or $w'$ depending on which of these two vertices has fewer incident white edges.
    Label $S$ as \textsf{subdivided chord}.
    Color the edges of $f_i$ black and set $i = i + 1$.
    
    This process of determining the edges of the next face
    and adding labels can be done in linear time since we only check (white) degrees
    of vertices and do not traverse the same edge multiple times.
    We can re-add the removed edges and save the resulting groups of parallel (subdivided) chords.
\end{claimproof}

Next, we mark edges that we will put onto page~2
by adding them to a set~$M$.
For an illustration, see \cref{fig:marking-pw2}.
We will show afterwards that the unmarked edges correspond to a maximum-size outerplanar subgraph of~$G$.

Traverse the groups $\s_1, \dots, \s_{t+1}$
of parallel (subdivided) chords in the order they are separating $f_1, \dots, f_t$.
Note that $\s_1$ and $\s_{t+1}$ separate $f_1$ and $f_t$, respectively, from the outer face, to which we refer with $f_0$ or~$f_{t+1}$.
Let $i = 1, \dots, t+1$ and consider~$\s_i$.
The faces $f_{i-1}$ and $f_i$ are separated by the (subdivided) chords in~$\s_i$.
Each such face contains at least one edge from $P$ or~$Q$.
Let $R_{i-1} = P$ if $f_{i-1}$ contains an edge from~$P$ and let $R_{i-1} = Q$ otherwise.
Define $R_i$ analogously with respect to~$f_i$.
If $\s_i$ contains more than two subdivided chords,
add from all but two of these subdivided chords the edge
having an endpoint on $R_i$ to~$M$.
If $\s_i$ contains at least two subdivided chords,
add the first edge on $R_i$ bounding $f_i$ to~$M$ unless $i = t+1$, and
add the first edge on $R_{i-1}$ bounding $f_{i-1}$ to~$M$ if not done already and unless $i = 1$.
If $\s_i$ contains exactly one subdivided chord
and $i \ne 1$ and $i \ne t + 1$,
then check if an edge on $R_{i-1}$ bounding $f_{i-1}$
has already been added to~$M$;
if not, add the first edge from $R_i$ bounding $f_i$ to~$M$.
Clearly, this procedure to determine~$M$ runs in linear time.

To show that the edges in $E(G) \setminus M$
can be assigned to the first page, while
the edges in~$M$ can be assigned to the second page,
we next describe a planar embedding of~$G$ with a planar subhamiltonian cycle~
$C$ such that all edges of~$E(G) \setminus M$ lie on~$C$ or on the inside of~$C$,
while all edges of~$M$ lie on the outside of~$C$.
The ordering of $V(G)$ on the spine follows~$C$.

The drawing of~$G$ follows from the previous description of~$P$ and~$Q$,
only the order~of parallel (subdivided) chords must be chosen;
it follows implicitly from the following description to keep~$C$ planar.
We describe an upper and lower contour of~$C$ simultaneously
by traversing the faces from left to right
(i.e., by index $i = 1, \dots, t+1$).
Both contours start in the middle vertex of a subdivided chord~$S \in \s_1$
where no edge of~$S$ is in~$M$.
See \cref{fig:subhamiltonian-pw2} for an illustration of~$C$.
The upper contour follows the vertices of~$P$,
while the lower contour follows the vertices in~$Q$.
Suppose face $f_i$ has an edge~$e = uv$ of $P$ or~$Q$ in~$M$.
Recall that $e \in R_i$, where $R_i \in \{P, Q\}$.
In $C$, we add between $u$ and $v$
all middle vertices of subdivided chords of~$\s_i$
that have not been added to~$C$ yet
(at most one has been added when we were considering~$f_{i-1}$
as described next).
The order of corresponding subdivided chords is such that
the ones without edges in~$M$ precede those with an edge in~$M$.
Then, before the contour continues with $v$
along~$P$ or~$Q$, we add a middle vertex
of a subdivided chord~$S \in \s_{i+1}$
where no edge of~$S$ is in~$M$ (if existent;
otherwise the contour continues directly with~$v$).
We close the cycle $C$ by the middle vertices of~$\s_{t+1}$
(except for potentially one that has been added to the contour
when considering~$f_t$).
The order of the middle vertices along $C$ in this traversal
prescribes the order of the (subdivided) chords in the corresponding planar drawing.
Note that~$C$ is a planar subhamiltonian cycle
that separates the edges in~$E(G) \setminus M$ from those in~$M$;
in particular, we have added in a face
the first edge of $P$ or $Q$ to~$M$
and edges of~$M$ belonging to subdivided chords
are adjacent to them,
which means that the edges of $M$ do not
cross the subhamiltonian cycle.

It remains to show that $E(G) \setminus M$ induces
a maximum-size outerplanar subgraph of~$G$.
\begin{claim}
    \label{clm:nok23}
    The graph~$G$ becomes outerplanar when the edges of~$M$ are removed.
\end{claim}

\begin{claimproof}
As a certificate consider the drawing (implicitly) described before.
All vertices of~$P$ and~$Q$ lie on the outer face.
Consider each group $\s_i$ of parallel (subdivided) chords.

If $\s_i$ contains two or more subdivided chords,
all but two subdivided chords have become leaves;
call the two remaining subdivided chords $S$ and~$S'$.
Place~$S$, leftmost, i.e.,
incident to face~$f_{i-1}$.
Since we have removed an edge of~$P$ or~$Q$ from~$f_{i-1}$,
$f_{i-1}$ is merged into the outer face and the middle vertex of~$S$
lies on the outer face.
Next, arrange a potential chord of~$\s_i$,
which does not need to lie on the outer face.
Then, add~$S'$ and all leaves such that they are incident to~$f_i$.
Since we have also removed an edge of~$P$ or~$Q$ from~$f_i$,
$f_i$ is now also merged into the outer face, and
the middle vertex of~$S'$ and all leaves lie on the outer face.

If $\s_i$ contains one subdivided chord,
we have removed an edge of~$P$ or~$Q$ from~$f_{i-1}$ or~$f_i$.
Place the subdivided chord incident to that face
as it is now merged into the outer face.
Hence, the middle vertex lies on the outer face.
If there is just a chord, 
there is no vertex not on~$P$ or~$Q$
that needs to lie on the outer face.
\end{claimproof}

To show maximality, it suffices to argue about edge sets whose
removal eliminates all $K_{2,3}$ minors
since $K_4$ has pathwidth~3 and pathwidth is closed under taking minors.

\begin{claim}
    \label{clm:min-marking}
    For any set~$M^\star \subseteq E(G)$ such that $G$ has no $K_{2,3}$ minor when the edges of~$M^\star$ are removed,
    it holds that $|M^\star| \ge |M|$.
\end{claim}

\begin{claimproof}
    Suppose for a contradiction that there is a set of edges $M^\star$
    whose removal results in a maximum-size outerplanar subgraph and $|M^\star| < |M|$.
    
    Clearly, each group with more than two subdivided chords forms a $K_{2,3}$ on its own.
    Hence, like $M$, $M^\star$ contains for each subdivided chords,
    except for potentially two of them, an edge.
    Therefore, we assume in the rest of the proof
    that we have a reduced instance
    where, for any $i \in \{1, \dots, t+1\}$, $\s_i$ contains at most two subdivided chords.
    Also note that it suffices to consider biconnected components
    when searching for $K_{2,3}$ minors.
    
    The edges of~$G$ have a natural order from left to
    right given by the order of the groups:
    for each group, we consider first its subdivided chords as twins,
    then its chord.
    Between the edges of each two consecutive groups~$\s_i$ and~$\s_{i+1}$,
    we have the edges of~$P$ and~$Q$ on the boundary of the face~$f_i$.
    
    Let~$e$ be the first edge where the containment in~$M$ and~$M^\star$ differs.
    We consider several cases where $e$ lies.
    We will see that in each case, we can modify $M^\star$ slightly
    to be more similar to~$M$ without changing its size and without introducing a $K_{2,3}$ minor.
    We can repeat the following case distinction until $M^\star = M$, which proves the claim.
    
    \subparagraph{$e$ lies on the only subdivided chord of its group~$\s_i$.}
    We know, by the construction of~$M$,
    that $e$ cannot be in~$M$ and, thus, $e \in M^\star$.
    We remove $e$ from $M^\star$ and add
    an edge of~$P$ or~$Q$ on the boundary of~$f_i$
    (that is, the face to the right of the group~$\s_i$)
    to $M^\star$ instead.
    Now $\s_i$ cannot be in a biconnected component
    with anything to its right, only with things to its left.
    We know, however, that $\s_i$ and everything
    to its left is now the same for $M$ and~$M^\star$
    and, therefore, has no $K_{2,3}$ minor.
    
    \subparagraph{$e$ lies on one of two subdivided chords in its group~$\s_i$.}
    Again, $e \notin M$ but $e \in M^\star$.
    We know that both~$f_{i-1}$ and~$f_i$ have an edge of $P$ or~$Q$ in $M$, hence,
    the same edge of~$f_{i-1}$ is also in~$M^\star$.
    If also~$f_i$ had an edge of $P$ or~$Q$ in $M^\star$,
    then $\s_i$ is not in a biconnected component with anything else
    and we could simply remove $e$ from $M^\star$,
    which contradicts $M^\star$ being minimum.
    So, remove $e$ from $M^\star$ and add
    an edge of~$P$ or~$Q$ on the boundary of~$f_i$
    to $M^\star$ instead.
    Now $\s_i$ cannot be in a biconnected component
    with anything to its left or right,
    and, therefore, has no $K_{2,3}$ minor.
    Note also that among all edges of~$\s_i$,
    only $e$ could have been in $M^\star$ as otherwise
    $M^\star$ would not be minimum.
    
    \subparagraph*{$e$ is the chord in its group~$\s_i$.}
    Since $M$ and~$M^\star$ are equal until~$e$,
    no edge of the (potentially existent) subdivided chords of~$\s_i$ is in~$M^\star$.
    The group~$\s_i$ does not have a $K_{2,3}$ minor when considering~$\s_i$
    together with everything on the left.
    We can avoid a $K_{2,3}$ minor by replacing~$e$ in~$M^\star$
    with an edge from $P$ or~$Q$ on the boundary of~$f_i$
    (that is, the face to the right of the group~$\s_i$)
    since then there is no biconnected component including (parts of)~$\s_i$
    and anything to its right.
    
    \subparagraph*{$e$ lies on $P$ or~$Q$.}
    Let $f_i$ be the face on which $e$ lies.
    If there is just one edge of~$P$ or~$Q$ belonging to~$M^\star$
    and a different edge of~$P$ or~$Q$ belonging to~$M$,
    then we can simply replace the edge in~$M^\star$ by that in~$M$
    without changing the (non-trivial) biconnected components
    and, hence, without creating a $K_{2,3}$.
    By the algorithm creating $M$,
    there is at most one edge of~$P$ or~$Q$ per face in~$M$.
    Moreover, there is at most one edge of~$P$ or~$Q$ per face in~$M^\star$
    because if there were two, we could remove one and would not change
    the set of (non-trivial) biconnected components;
    this contradicts $M^\star$ being minimum.
    
    Now suppose that $e \in M^\star$ and none of the edges of~$P$ and~$Q$ on~$f_i$ are in~$M$.
    Then, we know that there is no~$K_{2,3}$
    when considering the partial graph until the group~$\s_{i+1}$
    on the direct right side of~$f_i$
    because also none of the (subdivided) chords is in~$M$.
    This means removing~$e$ from~$M^\star$ and
    adding an edge from~$P$ or~$Q$ on~$f_{i+1}$ is a safe operation
    to not add any $K_{2,3}$ minor in the graph where we remove~$M^\star$
    since everything to the right of~$\s_{i+1}$ is then in a different biconnected component.
    
    Finally, suppose that $e \in M$ and none of the edges of~$P$ and~$Q$ on~$f_i$ are in~$M^\star$.
    If an edge of~$f_i$ is in $M$, this means that there is a subdivided chord in~$\s_i$,
    and either a second subdivided chord in~$\s_i$ or no edge of~$P$ and~$Q$ on~$f_{i-1}$ is in~$M$.
    In both situations, this would result in a $K_{2,3}$ minor together with the (subdivided) chords in~$\s_{i+1}$.
    Hence, each (subdivided) chord in~$\s_{i+1}$ has an edge in~$M^\star$.
    If there are at least two (subdivided) chords in~$\s_{i+1}$,
    replace these two edges in~$M^\star$ by an edge of~$P$ or~$Q$ on the boundary of~$f_i$
    and by an edge of~$P$ or~$Q$ on the boundary of~$f_{i+1}$,
    which suffices to avoid $K_{2,3}$ minors due to the separation into distinct biconnected components.
    If there is exactly one (subdivided) chord~$S$ in~$\s_{i+1}$,
    replace its edge in~$M^\star$ by the edge of~$P$ or~$Q$ on the boundary of~$f_i$ that is in~$M$.
    Note that this does not create a new $K_{2,3}$ minor with
    the (subdivided) chord~$S$ and anything that follows
    (in the graph where we remove~$M^\star$) on the right side of~$S$
    because if there was a path from the vertex of~$S$ on~$P$
    to the vertex of~$S$ on~$Q$ on the right side, this would have formed a $K_{2,3}$ minor
    together with the two subdivided chords of~$\s_i$,
    or with the subdivided chord of~$\s_i$ and the (subdivided) chord of~$\s_{i-1}$,
    already before.
\end{claimproof}

We have found a maximum-size outerplanar subgraph
due to \cref{clm:nok23,clm:min-marking}.
As argued before,
the remaining edges do not cross the subhamiltonian cycle
in a planar drawing of~$G$,
and, hence, can all be placed onto Page~2.
\end{proof}

\section{Future Work}
We have introduced the notion of weighted book thickness.
Since this is a new concept, there are many questions and directions that allow for future research.
We list some of them:
\begin{itemize}
    \item What is the tight ratio between $\wbt_2$ and $\wbt_3$ for the  considered graph classes?
    \item Can the difference between $\wbt$ and $\wbt_{\bt}$ be arbitrarily large?
    \item Can the difference between $\bt$ and the min.~$k$ such that $\wbt=\wbt_k$ be arbitrarily large?
   \item Study more graph classes, for example, planar graphs or planar 3-trees.
    \item We have used a simple linear weight function,
    where an edge on page~$i$ has weight~$i$.
    This can be generalized to any (monotonically increasing) function.
    What about quadratically or exponentially growing weight functions?
    Can we make some general statements?
    \item Extend weighted book thickness to other linear layout types, e.g., queue layouts.
    \item Study other notions of graph thickness, e.g., geometric thickness, in the weighted setting.
\end{itemize}

\bibliographystyle{plainurl}
\bibliography{stacks,bibliography}
\end{document}